\documentclass[journal,10pt]{IEEEtran}

\usepackage{amsmath,amssymb,amsthm}
\usepackage{booktabs}
\usepackage{cite}
\usepackage{microtype}
\usepackage[hidelinks]{hyperref}

\hypersetup{
  pdftitle={Compressed Single-Tone Frequency Estimation With Unknown Complex Gain: Singular Rates and Global Identifiability},
  pdfauthor={Armon Rasooli},
  pdfsubject={Signal processing and frequency estimation}
}

\newcommand{\C}{\mathbb C}
\newcommand{\T}{\mathbb T}
\newcommand{\CN}{\mathcal{CN}}
\newcommand{\Gr}{\operatorname{Gr}_{\C}}
\newcommand{\tr}{\operatorname{tr}}
\newcommand{\GI}{\mathrm{GI}}
\newcommand{\eps}{\varepsilon}

\newtheorem{theorem}{Theorem}
\newtheorem{lemma}[theorem]{Lemma}
\newtheorem{proposition}[theorem]{Proposition}
\newtheorem{corollary}[theorem]{Corollary}

\title{Compressed Single-Tone Frequency Estimation With Unknown Complex Gain:\\
Singular Rates and Global Identifiability}

\author{Armon Rasooli%
\thanks{This work received no external funding.}%
\thanks{Armon Rasooli is with Iran University of Science and Technology, Tehran, Iran
(e-mail: armonrasooli@gmail.com). ORCID:
\href{https://orcid.org/0009-0002-6583-7090}{0009-0002-6583-7090}.}}

\begin{document}
\maketitle
\thispagestyle{empty}
\pagestyle{empty}

\begin{abstract}
Fixed linear compression can preserve local Fisher information for a sinusoid yet destroy global
frequency identification, while a dark response can make local estimation nonregular. We study both
failures for a single complex tone observed through a fixed complex-linear sketch with unknown
nonzero complex gain and pre-sketch white Gaussian noise. Near an isolated analytic dark frequency,
we separate radial signal vanishing from optimized projective contact between the two signed
frequency branches. When the gain magnitude is constrained to a fixed nondegenerate interval,
finite contact is equivalent to local quotient identifiability and minimax consistency. The sharp
mean-square-error rate is determined by the sum of the radial and contact orders; infinite contact
produces exact local aliases. Globally, we formulate a worst-frequency efficient-information
objective on whitened row spaces. We solve it exactly for every even output rank, prove uniqueness
and quantitative rigidity of the symmetric-edge projector, and solve the even-aperture co-rank-one
case. The rank-two optimum is aliased, and a winding obstruction gives a positive lower bound on
the information price of global identification. From three outputs onward, sketches that are
globally identifying with an immersive projective response are open and dense and have zero price
at the level of suprema; for every even rank of at least four, the exact optimizer has this
property. Thus local information preservation, singular recoverability, and global identifiability
obey distinct compression laws within one estimation model.
\end{abstract}

\begin{IEEEkeywords}
Frequency estimation, data compression, statistical signal processing, global identifiability,
Fisher information, nonregular estimation, experiment design.
\end{IEEEkeywords}

\section{Introduction}
\IEEEPARstart{F}{requency} estimation is often designed locally: one selects measurements that
retain a large derivative of the mean, or equivalently a large Fisher information. Under an unknown
complex amplitude, however, the physically observable response is a complex line. Two failures are
then invisible to a purely regular, pointwise calculation. First, a compressed response can vanish
at one frequency; the first nonzero derivative need not determine the ensuing nonregular rate.
Second, two separated frequencies can produce proportional compressed responses and hence identical
noiseless data after changing the unknown gain. This paper gives quantitative laws for both
failures in one single-tone experiment.

Fisher-preserving compression and the distinction between Fisher equality and statistical
sufficiency are established topics. MOPED-type constructions preserve a local Fisher matrix at a
fiducial model, while their compressed likelihoods may retain degeneracies
\cite{Heavens2000,Graff2011}; Pollard gave a general score-based criterion showing that Fisher
information can be preserved by an insufficient statistic \cite{Pollard2013}. For nonlinear
Gaussian means, projector formulas for information after compression are also known
\cite{Pakrooh2015}. These antecedents already separate local information from global
identification. The problem here is to solve the resulting quantitative harmonic design.

Endpoint and clustered coordinate-sampling patterns are likewise antecedented
\cite{Oliphant2004,Trittler2009,VanderWerf2026}. Classical secant avoidance and
diagonal compactification explain generic injective projections
\cite{Laksov1976,Johnson1978,FultonMacPherson1994,Gorlach2019}. Those results do
not determine the optimum over all complex row spaces after gain profiling, its
projector-level equality cases, the information price of global identification,
or the singular rate at a dark response.

Our contributions are fourfold.
\begin{enumerate}
\item We prove a finite/infinite analytic-contact phase law at an isolated dark response. Under a
fixed strict gain annulus, finite optimized opposite-branch projective contact is equivalent to
local quotient identifiability and local signed minimax consistency; the full-pair exponent is the
sum of the radial and contact orders. Infinite contact yields admissible exact aliases.
\item We solve the arbitrary-complex-row-space information problem for every even output rank. The
symmetric-edge row-space projector is uniquely optimal and quantitatively rigid; for every even
rank at least four it is regularly globally identifying. We also solve the even-aperture
co-rank-one family exactly.
\item We quantify global-identification cost. At rank two, a winding-protected neighborhood of the
unique information optimum is aliased, yielding an explicit positive price floor. From rank three
onward, regularly globally identifying designs are open and dense and the constrained and
unconstrained values agree as suprema.
\item We identify the exact zero set of the uniform information functional and classify lossless
ranks. This produces a concrete lossless-but-aliased example and a compact local/global design
taxonomy.
\end{enumerate}

The local and global results share the same signal-processing cause: quotienting the unknown gain
converts both near-diagonal dark collisions and off-diagonal aliases into collisions of complex
lines. Analytic
contact controls the severity of the former; spectral rigidity, winding, and secant incidence
control the avoidability of the latter. The compactified incidence used below is classical
infrastructure rather than a claimed new geometric construction.

\begin{table*}[t]
\caption{Relation to the closest established lines of work. Each row separates an antecedented
mechanism from the theorem-level question addressed here.}
\label{tab:priorart}
\centering
\footnotesize
\setlength{\tabcolsep}{4.0pt}
\begin{tabular}{@{}p{0.18\textwidth}p{0.34\textwidth}p{0.41\textwidth}@{}}
\toprule
Line of work & Established result & Question not supplied by that result alone\\
\midrule
Fisher-preserving compression \cite{Heavens2000,Graff2011,Pollard2013,Pakrooh2015}
& Local Fisher matrices can be preserved under compression, even when the statistic is not
sufficient or the compressed likelihood has additional modes.
& These results do not determine the worst-frequency optimum over all complex harmonic row spaces,
its equality cases, or the information cost of requiring global frequency identification.\\[2pt]
Single-tone sampling design \cite{Oliphant2004,Trittler2009,VanderWerf2026}
& Endpoint concentration and clustered coordinate sampling are established design motifs.
& Coordinate-design results do not by themselves prove optimality over arbitrary complex row
spaces, projector rigidity, or the exact even-rank and co-rank-one classifications below.\\[2pt]
Nonregular estimation and optimal recovery
\cite{Donoho1994,Chernoyarov2018,HoNguyen2019,Plummer2026}
& Singular moduli, cusp orders, and observable orders can control nonstandard rates in their
respective models.
& They do not supply the present full-pair modulus under an annular complex nuisance or the
finite/infinite projective-contact identifiability equivalence.\\[2pt]
Analytic contact and preparation \cite{Tamm1981,Parusinski1994,Birbrair2017}
& Subanalytic preparation and arc-contact theory provide existence and arithmetic of leading
exponents.
& These tools do not establish that every nuisance-profiled pair stratum is no worse than \(r+c\),
or convert that exponent into the all-large-\(R\) minimax law below.\\[2pt]
Secant and projection geometry \cite{Laksov1976,Johnson1978,FultonMacPherson1994,Gorlach2019}
& Secant avoidance, tangent limits, and configuration compactification support generic embedding
arguments.
& Generic projection theory does not couple injectivity to the worst-frequency information
objective, prove the rank-two winding exclusion, or calculate the resulting price floor.\\
\bottomrule
\end{tabular}
\end{table*}

Table~\ref{tab:priorart} is deliberately mechanism-specific. Classical tools are essential to the
proofs, but the claims below concern their quantitative interaction in the fixed harmonic
experiment rather than the novelty of the tools themselves.

\section{Experiment and Quotient Geometry}
The underlying complex-tone likelihood and its classical regular information analysis date at least
to \cite{RifeBoorstyn1974}. Let
\begin{equation}
a_L(\omega)=(1,e^{i\omega},\ldots,e^{i(L-1)\omega})^{\mathsf T},
\qquad \omega\in\T,
\end{equation}
where \(L\ge2\). For independent repetitions,
\begin{equation}
X_j=\alpha a_L(\omega)+W_j,\qquad
W_j\sim\CN(0,\sigma^2I_L),
\label{eq:rawmodel}
\end{equation}
with unknown \(\alpha\in\C^\times\). A fixed full-row-rank \(B\in\C^{m\times L}\) gives \(BX_j\).
Exact output whitening replaces \(B\) by
\begin{equation}
C=(BB^*)^{-1/2}B,\qquad CC^*=I_m,\qquad P=C^*C.
\end{equation}
Thus \(Y_j=CX_j\sim\CN(\alpha Ca_L(\omega),\sigma^2I_m)\), and every row-space result is a
statement about the rank-\(m\) orthoprojector \(P\). This reduction assumes pre-sketch white noise
and arbitrary invertible output calibration. It is not a passive-hardware or post-combiner-noise
theorem.

At a nonzero response \(h(\omega)=Ca_L(\omega)\), the nuisance gain leaves only
\([h(\omega)]\in\mathbb{CP}^{m-1}\). For nonzero \(x,y\), direct least squares gives
\begin{equation}
\inf_{\beta\in\C}\|\alpha x-\beta y\|
=|\alpha|\,\|x\|\,d_{\rm ch}([x],[y]),
\label{eq:profileidentity}
\end{equation}
where \(d_{\rm ch}([x],[y])^2=1-|x^*y|^2/(\|x\|^2\|y\|^2)\). Projectivization fails exactly at a
dark response \(h(\omega_0)=0\); Section~III supplies the replacement geometry.

\subsection{Efficient Score After Gain Profiling}
The normalization used below follows directly from the declared Gaussian experiment. Write
\(\alpha=\alpha_{\rm R}+i\alpha_{\rm I}\), \(h(\omega)=Ca_L(\omega)\), and
\(\nu(\omega,\alpha)=\alpha h(\omega)\). For one observation,
\begin{equation}
\ell(\omega,\alpha;y)=\mathrm{const}-\frac{\|y-\nu(\omega,\alpha)\|^2}{\sigma^2}.
\end{equation}
For real coordinates \(\vartheta=(\omega,\alpha_{\rm R},\alpha_{\rm I})\), the \(R\)-sample Fisher
matrix satisfies
\begin{equation}
[\mathcal I_R(\vartheta)]_{uv}=\frac{2R}{\sigma^2}
\operatorname{Re}\!\left\{(\partial_u\nu)^*(\partial_v\nu)\right\}.
\label{eq:realfisher}
\end{equation}
The two gain derivatives are \(h\) and \(ih\); their real span is the complex line
\(\operatorname{span}_{\C}\{h\}\). At \(h\ne0\), the Schur complement of this nuisance block is
\begin{equation}
\mathcal I_{\omega\omega}^{\rm eff}
=\frac{2R|\alpha|^2}{\sigma^2}
\inf_{\lambda\in\C}\|h'-\lambda h\|^2
=\frac{2R|\alpha|^2}{\sigma^2}\|P_h^\perp h'\|^2.
\end{equation}
Thus the deterministic projective conorm and the Gaussian efficient score coincide only after the
probability model and covariance have been fixed.

For the regular information problem, put
\begin{equation}
\mu=\frac{L-1}{2},\quad
D=\operatorname{diag}(n-\mu)_{n=0}^{L-1},\quad
g(\omega)=iDa_L(\omega),
\end{equation}
and
\begin{equation}
S_L=\|g(\omega)\|^2=\frac{L(L^2-1)}{12}.
\end{equation}
Indeed, for \(N=\operatorname{diag}(0,\ldots,L-1)=D+\mu I_L\),
\begin{equation}
a_L'=iNa_L=g+i\mu a_L,
\qquad h'=Cg+i\mu h.
\end{equation}
The centered term lies in the gain-nuisance line. Since \(\|Cv\|^2=v^*Pv=\|Pv\|^2\), define
\begin{align}
\rho_P(\omega)&=\frac1{S_L}\inf_{\lambda\in\C}
 \|P(g(\omega)-\lambda a_L(\omega))\|^2,\label{eq:rho}\\
F(P)&=\inf_{\omega\in\T}\rho_P(\omega),\qquad
\Gamma_{L,m}=\sup_{\operatorname{rank} P=m}F(P).\label{eq:Fgamma}
\end{align}
The supremum in \eqref{eq:Fgamma} is attained by the continuity proved with
Theorem~\ref{thm:global}; the frequency operation remains an infimum. Up to
\(2R|\alpha|^2S_L/\sigma^2\), \(\rho_P\) is the efficient Fisher information for frequency after
profiling the complex gain.
For full observation, \(a_L^*g=i\sum_n(n-\mu)=0\), so \(S_L\) is exactly the full-aperture
efficient benchmark and \(F(P)\) is its worst-frequency retained fraction. At darkness the nuisance
block in \eqref{eq:realfisher} is singular; the Schur-complement interpretation is unavailable,
which is why pointwise and design-level continuity are separated below.

A projector is globally identifying (GI) if \(Pa_L(\omega)\ne0\) for every \(\omega\) and
\(\omega\mapsto[Pa_L(\omega)]\) is injective. It is regular-GI if this map is also immersive. Write
\begin{equation}
\Gamma^{\GI}_{L,m}=\sup_{P\in\mathfrak G^{\GI}_{L,m}}F(P),
\qquad \Pi_{L,m}=\Gamma_{L,m}-\Gamma^{\GI}_{L,m}.
\end{equation}

\section{Dark Responses: Contact and Minimax Rate}
Fix a dark frequency and use the signed physical chart \(t=\omega-\omega_0\). The whitened analytic
response factors uniquely as
\begin{equation}
h(t)=t^rq(t),\qquad r\in\mathbb N,\quad r\ge1,\quad q(0)\ne0,
\label{eq:germ}
\end{equation}
after shrinking the chart so \(q(t)\ne0\). The integer \(r\) is the radial order: it controls how
rapidly signal magnitude disappears. It is invariant under a local analytic source coordinate with
nonzero derivative and under a constant invertible output map.

To measure what remains after gain profiling, define for small \(\rho>0\)
\begin{equation}
\delta(\rho)=\min_{\rho\le t,s\le2\rho}
d_{\rm ch}([q(t)],[q(-s)]).
\end{equation}
Restricted analytic data are globally subanalytic on a sufficiently small compact chart. Proper
minimization and one-variable preparation \cite{Tamm1981,Parusinski1994} imply either \(\delta\) is
eventually zero, denoted \(c=\infty\), or
\begin{equation}
\delta(\rho)=C_\delta\rho^c+o(\rho^c),\qquad C_\delta>0,\quad c\in\mathbb Q_{>0}.
\end{equation}
Supplementary Lemma~8 proves that the finite exponent and the eventual-zero alternative are
independent of the fixed window factor \(2>1\). The contact order \(c\) quantifies how rapidly the
positive and negative frequency branches become indistinguishable as complex lines after the
unknown gain has been removed.

\subsection{Why the Two Orders Add}
The roles of \(r\) and \(c\) differ. On a balanced opposite-branch pair \(t,s\asymp\rho\),
\begin{equation}
\|h(t)\|\asymp\rho^r,\qquad
d_{\rm ch}([q(t)],[q(-s)])\asymp\rho^c,\qquad t+s\asymp\rho.
\end{equation}
Equation~\eqref{eq:profileidentity} therefore predicts a separation of order \(\rho^{r+c}\): radial
vanishing reduces signal magnitude, while projective contact reduces the directional distinction
surviving gain profiling. This is only a candidate worst branch. Theorem~\ref{thm:contact}
additionally proves that imbalanced, same-side, and gain-boundary sequences cannot vanish faster.

Choose an affine chart at \([q(0)]\) and write the first nonconstant projective jet as
\begin{equation}
z(t)=z(0)+v_{k_0}t^{k_0}+O(t^{k_0+1}),\qquad v_{k_0}\ne0.
\end{equation}
Equal radii show \(c\ge k_0\). If \(k_0\) is odd, \(z(t)-z(-s)\) has leading term
\(v_{k_0}(t^{k_0}+s^{k_0})\), so cancellation is impossible for positive \(t,s\) and \(c=k_0\).
Contact exceeding \(k_0\) forces even \(k_0\) and \(s/t\to1\) on every sharper arc. On a same-side
balanced cone, the same jet yields
\begin{equation}
\|z(t)-z(u)\|\ge c_0M^{k_0-1}|t-u|,\qquad M=\max\{|t|,|u|\}.
\end{equation}
Since \(k_0\le c\), that stratum is no worse than exponent \(r+c\). The family
\begin{equation}
q_j(t)=(1,t^2,t^{2j+1})^{\mathsf T},\qquad j\ge1,
\end{equation}
has fixed first projective order two but contact \(c=2j+1\). Thus neither radial order nor the
first visible projective derivative alone determines the singular rate.
Indeed, on \(t,s\in[\rho,2\rho]\) its affine opposite-branch difference contains
\(t^{2j+1}+s^{2j+1}\ge2\rho^{2j+1}\), while equal radii cancel the quadratic coordinate and attain
that order.

\begin{theorem}[Analytic contact phase law]\label{thm:contact}
Let \eqref{eq:germ} be a nonzero real-analytic dark germ and fix, before taking any local limit,
the strict annulus
\(
\mathcal A=\{\alpha\in\C:A_-\le|\alpha|\le A_+\}
\)
with \(0<A_-<A_+<\infty\). There is \(\eps_0>0\) such that the following are equivalent for every
fixed \(0<\eps\le\eps_0\):
(i) \(c<\infty\); and
(ii) \(\alpha h(t)=\beta h(u)\), with \(t,u\in[-\eps,\eps]\) and \(\alpha,\beta\in\mathcal A\),
implies \(t=u\).

If \(c<\infty\) and \(p=r+c\), then a constant \(C_\eps>0\) satisfies
\begin{equation}
\|\alpha h(t)-\beta h(u)\|\ge C_\eps|t-u|^p
\label{eq:fullpair}
\end{equation}
for all such \(t,u,\alpha,\beta\). The exponent is sharp, and the all-scale opposite-branch profile
obeys
\begin{equation}
\begin{aligned}
\Delta_{\mathcal A}(\rho)
&=\min_{\substack{\rho\le t,s\le2\rho\\\alpha,\beta\in\mathcal A}}
\|\alpha h(t)-\beta h(-s)\|\\
&=C_{\mathcal A}\rho^p+o(\rho^p),\qquad C_{\mathcal A}>0.
\end{aligned}
\end{equation}
If \(c=\infty\), every sufficiently small fixed interval contains positive \(t,s\) and admissible
gains satisfying \(\alpha h(t)=\beta h(-s)\).
\end{theorem}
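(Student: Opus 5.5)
The plan rests on the profile identity \eqref{eq:profileidentity}. For $t,u\ne0$ and gains in $\mathcal A$ it reduces every pair to a product of a radial factor and a chordal factor:
\[
\|\alpha h(t)-\beta h(u)\|\ \ge\ \inf_{\beta'\in\C}\|\alpha h(t)-\beta' h(u)\|=|\alpha|\,|t|^r\|q(t)\|\,d_{\rm ch}([q(t)],[q(u)]).
\]
The same bound holds with the roles of $(t,\alpha)$ and $(u,\beta)$ exchanged, so the factor $|t|^r$ may be replaced by $\max\{|t|,|u|\}^r$. The annulus is needed for a separate, purely radial lower bound. Since $A_-\le|\alpha|,|\beta|\le A_+$ and $\|q\|$ is bounded above and below near $0$, we get $\|\alpha h(t)-\beta h(u)\|\ge \big|\,|\alpha|\,\|h(t)\|-|\beta|\,\|h(u)\|\,\big|$. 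This is large unless $|t|^r\asymp|u|^r$, and because $A_-<A_+$ is fixed it forces $|u|/|t|\in[\kappa^{-1},\kappa]$ for a fixed $\kappa$ on any vanishing pair. (If $u=0$, the bound $\|\alpha h(t)\|\ge A_-c|t|^r\ge A_-c|t-u|^p$ is immediate.) So only balanced pairs need work, and I would split them into two strata.

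\emph{Same-side balanced stratum.} Here $t,u$ have the same sign and $M=\max\{|t|,|u|\}\asymp|t|$. The affine-chart jet estimate stated before the theorem gives $d_{\rm ch}\gtrsim\|z(t)-z(u)\|\ge c_0M^{k_0-1}|t-u|$. Combined with the radial factor this gives $\gtrsim M^{r+k_0-1}|t-u|\ge |t-u|^{r+k_0}/2^{r+k_0-1}$, using $|t-u|\le 2M$. Since $k_0\le c$ and $|t-u|\le 2\eps$, this is $\ge C|t-u|^{p}$.

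\emph{Opposite-side balanced stratum.} Write $u=-s$ with $t,s>0$, $s/t\in[\kappa^{-1},\kappa]$, and set $\rho=\min(t,s)$. A single window $[\rho,2\rho]$ need not contain both points, so I would use Supplementary Lemma~8 (window independence) to replace the window factor $2$ by $\kappa$. That yields $d_{\rm ch}([q(t)],[q(-s)])\ge C\rho^{c}$, and the radial factor contributes $\rho^r$. Since $|t-u|=t+s\asymp\rho$, the product is $\ge C|t-u|^{r+c}$. This proves \eqref{eq:fullpair} on $[-\eps,\eps]$ for $\eps\le\eps_0$; here $\eps_0$ is chosen so that $q\ne0$, the jet bound holds, and the subanalytic asymptotics of $\delta$ are valid.

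\emph{Equivalences and the profile.} (i)$\Rightarrow$(ii) follows from \eqref{eq:fullpair}. For (ii)$\Rightarrow$(i) and the last claim, suppose $c=\infty$. Then $\delta(\rho)=0$ for all small $\rho$, so $[q(t)]=[q(-s)]$ with $t,s\in[\rho,2\rho]$, i.e. $q(t)=\lambda q(-s)$. The radial mismatch $|\lambda|\,s^r/t^r$ still has to be absorbed by the annulus, and this is where I expect the main difficulty: the naive equal-radius choice need not give the right gain ratio. My first attempt would pick $\beta=\alpha\lambda t^r/s^r$, noting that $\|q(t)\|/\|q(-s)\|\to1$ and $t/s\in[1/2,2]$. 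Those bounds alone only confine the ratio $|\beta/\alpha|$ to $[2^{-r},2^r]$, which may exceed $A_+/A_-$. To fix this, I would exploit that the set where $\delta$ vanishes is subanalytic and, by the preparation theorem, contains an analytic arc $s=\phi(t)$ with $\phi(t)/t\to1$. This uses the remark that sharper contact forces $s/t\to1$, which here is contact $>k_0$. Along that arc the required gain ratio tends to $1$, so it lies in $[A_-/A_+,A_+/A_-]$ for small $t$, and one takes $\alpha=A_-$. This alias violates (ii) in every small interval.

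Finally, for the profile $\Delta_{\mathcal A}$, the lower bound $\gtrsim\rho^{p}$ comes from the estimate above. For the upper bound, pick an extremal pair $(t,s)$ realizing $\delta(\rho)\sim C_\delta\rho^c$ and optimize $\beta$; the optimum is $\sim\rho^{r+c}$, which gives sharpness. The exact asymptotic $C_{\mathcal A}\rho^p+o(\rho^p)$ then follows because $\Delta_{\mathcal A}$ is a subanalytic function of $\rho$ (compact minimization of analytic data). One-variable preparation gives it a Puiseux leading term, and the two-sided bound $\asymp\rho^p$ pins the exponent at $p$ with $C_{\mathcal A}>0$.
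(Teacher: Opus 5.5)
Your stratification (axis pairs, radially imbalanced pairs, balanced same-side pairs via the first projective jet, balanced opposite-side pairs via window independence) is the paper's priority cover, and your proof of the lower bound \eqref{eq:fullpair} is sound.

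\textbf{The gap} is in the finite-contact sharpness and in the upper bound $\Delta_{\mathcal A}(\rho)\lesssim\rho^p$. You optimize $\beta$ at a contact-extremal pair, but you never check that the optimizer lies in $\mathcal A$, and both $\Delta_{\mathcal A}$ and the sharpness claim allow only annulus gains. The unconstrained optimizer is $\beta_*=\alpha(-1)^r(t/s)^r\,q(-s)^*q(t)/\|q(-s)\|^2$. Hence $|\beta_*|/|\alpha|\approx(t/s)^r$, which on $[\rho,2\rho]^2$ can lie anywhere in $[2^{-r},2^r]$. This is exactly the radial mismatch you flagged for $c=\infty$.

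The step can genuinely fail. If $c=k_0$ is odd, every pair in the window has $d_{\rm ch}\asymp\rho^c$, including $(t,s)=(2\rho,\rho)$. If also $2^r>A_+/A_-$, the reverse triangle inequality shows that every admissible $\beta$ leaves a residual of order $\rho^r$ at that pair, not $\rho^{r+c}$. So ``take a pair realizing $\delta(\rho)$, then optimize $\beta$'' gives $\rho^{r+c}$ only for a pair with $s/t$ near $1$, and that must be proved.

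The paper closes this with the device you already used for $c=\infty$. At each small scale $\rho$, choose the pair as follows.
\begin{itemize}
\item If $c>k_0$, take the exact minimizer of $\delta(\rho)$. Here $k_0$ is even, and the parity expansion $z(t)-z(-s)=z_{k_0}(t^{k_0}-s^{k_0})+O((t+s)^{k_0+1})$ forces $s/t\to1$ on any pair with chordal distance $o(\rho^{k_0})$.
\item If $c=k_0$, take $t=s=\rho$. Here $k_0$ is odd, and equal radii already attain order $k_0$.
\end{itemize}
In both cases $|\beta_*|/|\alpha|\to1$. So with $|\alpha|$ strictly inside $(A_-,A_+)$, the competitor $\beta_*$ lies in $\mathcal A$ for all small $\rho$. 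This gives $\Delta_{\mathcal A}(\rho)\le|\alpha|\,\|h(t)\|\,d_{\rm ch}([q(t)],[q(-s)])\lesssim\rho^{r+c}$ at every small scale, which together with your lower bound pins the Puiseux exponent at $p$.

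The same interior-gain point fixes a slip in your $c=\infty$ step. Setting $\alpha=A_-$ fails whenever $|\beta/\alpha|<1$, since then $|\beta|<A_-$. Either take $|\alpha|$ in the interior of the annulus, as the paper does, or assign modulus $A_-$ to whichever of the two gains is smaller.
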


\emph{Proof mechanism.}
Supplementary Lemmas~8--9 give the proof. A priority cover separates central, radially
imbalanced, balanced same-side, and balanced opposite-side pairs. Radial imbalance is controlled by
norm separation. Same-side pairs are controlled by the first nonconstant projective jet; its order
\(k_0\) satisfies \(k_0\le c\). Only balanced opposite-side pairs attain the contact exponent. A
definable minimizing arc supplies sharpness, while its least-squares gain satisfies
\(|\beta_*|/|\alpha|\to1\), making the sharp competitor admissible for an interior true gain.
Infinite contact gives an exact projective collision with \(s/t\to1\) and the same
gain-admissibility limit. This full-pair step, rather than preparation alone, produces \(r+c\).

\emph{Quantifiers and uniformity.}
First the analytic germ, \(A_-\), \(A_+\), and noise scale are fixed. Then
\begin{equation}
\exists\eps_0>0\ \ \forall\,0<\eps\le\eps_0\ \ \exists C_\eps>0\ \
\forall(t,u,\alpha,\beta)\in I_\eps^2\times\mathcal A^2,
\end{equation}
the bound \eqref{eq:fullpair} holds. Its constant is uniform over both gains and all pair strata,
but no uniformity is asserted as \(A_-\downarrow0\), \(A_+\downarrow A_-\), or the neighborhood
changes. Sharpness is existential: an interior true gain and annulus-admissible opposite-branch
competitors attain order \(r+c\). In the infinite branch, the positive risk floor is attached to
each fixed neighborhood and need not be uniform as that neighborhood shrinks.

\begin{corollary}[Local signed minimax law]
In the repeated whitened experiment
\(Y_j\sim\CN(\alpha h(t),\sigma^2I_m)\), \(j=1,\ldots,R\),
fix \(\sigma>0\), the strict annulus above, and a sufficiently small signed interval
\(I_\eps=[-\eps,\eps]\). For \(c<\infty\),
\begin{equation}
\inf_{\widehat t}\sup_{t\in I_\eps,\,\alpha\in\mathcal A}
\mathbb E_{t,\alpha}(\widehat t-t)^2
=\Theta\!\left(R^{-1/(r+c)}\right).
\label{eq:minimax}
\end{equation}
Thus localization and root-mean-square error scale as \(R^{-1/[2(r+c)]}\). For \(c=\infty\), the
minimax risk in every fixed sufficiently small interval is bounded away from zero for all \(R\).
\end{corollary}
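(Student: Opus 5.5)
The proof splits into an upper bound obtained from a small estimator class and a lower bound obtained from two-point tests, both driven by Theorem~\ref{thm:contact}. We work with the sufficient statistic \(\bar Y=R^{-1}\sum_j Y_j\sim\CN(\alpha h(t),\sigma^2 I_m/R)\). For \(t,u\in I_\eps\) and \(\alpha,\beta\in\mathcal A\), the Kullback--Leibler divergence between the \(R\)-sample laws is
\[
R\|\alpha h(t)-\beta h(u)\|^2/\sigma^2 .
\]
The whole corollary therefore reduces to the full-pair modulus \(\omega(\delta)=\min\{\|\alpha h(t)-\beta h(u)\|:|t-u|\ge\delta\}\). Theorem~\ref{thm:contact} supplies \(\omega(\delta)\ge C_\eps\delta^{p}\) with \(p=r+c\), and sharpness through \(\Delta_{\mathcal A}\) supplies \(\omega(\delta)\lesssim\delta^p\) along opposite-branch pairs.

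\emph{Lower bound.} We take the sharp opposite-branch pair from the theorem. It consists of \(t=\rho\), \(u=-s\) with \(s\asymp\rho\), an interior true gain \(\alpha\), and an admissible \(\beta\in\mathcal A\), and it satisfies \(\|\alpha h(t)-\beta h(-s)\|\le 2C_{\mathcal A}\rho^p\) while \(|t-u|\ge 2\rho\). We choose \(\rho\asymp R^{-1/(2p)}\) so that the KL divergence \(R\rho^{2p}/\sigma^2\) stays bounded. Le Cam's two-point lemma then gives risk \(\gtrsim\rho^2\asymp R^{-1/p}\). For \(c=\infty\), the theorem provides an exact collision \(\alpha h(t)=\beta h(-s)\) with \(t\ne -s\) inside the fixed interval. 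The two laws coincide for every \(R\), and this yields the risk floor \(\ge (t+s)^2/4\).

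\emph{Upper bound.} We use the least-squares (profile) estimator restricted to \(I_\eps\times\mathcal A\),
\[
(\widehat t,\widehat\alpha)\in\arg\min_{u\in I_\eps,\ \beta\in\mathcal A}\|\bar Y-\beta h(u)\|,
\]
which exists by compactness. The basic inequality gives \(\|\widehat\alpha h(\widehat t)-\alpha h(t)\|\le 2\|\bar Y-\alpha h(t)\|\) deterministically. The theorem then yields
\[
|\widehat t-t|\le\bigl(2\|Z\|/C_\eps\bigr)^{1/p},\qquad Z=\bar Y-\alpha h(t)\sim\CN(0,\sigma^2 I_m/R).
\]
Squaring and taking expectations gives \(\mathbb E|\widehat t-t|^2\lesssim \mathbb E\|Z\|^{2/p}\asymp R^{-1/p}\), uniformly in \((t,\alpha)\). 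Since \(|\widehat t-t|\le2\eps\), there are no integrability issues. The root-mean-square statement follows by taking square roots.

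The main obstacle is transferring the full-pair inequality into an estimation bound. This requires the bound \eqref{eq:fullpair} to be uniform over both gains in the annulus, including competitor gains that sit on the annulus boundary. That uniformity is exactly what the theorem's priority cover supplies. The lower bound likewise depends on the sharp competitor gain being admissible, which is the gain-admissibility limit \(|\beta_*|/|\alpha|\to1\) established in the theorem. With both of these in hand, the remaining steps are the standard Le Cam and Gaussian moment computations.
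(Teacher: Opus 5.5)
Your proposal is correct and follows the paper's proof essentially step for step. The lower bound is a Le Cam two-point bound at scale $\rho_R\asymp R^{-1/(2p)}$ built from the all-scale profile $\Delta_{\mathcal A}$; the upper bound is a minimum-distance estimator on $I_\eps\times\mathcal A$ combined with the full-pair modulus and the Gamma moment of $\|\bar W\|^{2/p}$; and for $c=\infty$ an identical-law two-point floor $d_\eps^2/4$ comes from the exact alias. The one detail the paper adds is measurability: compactness only gives a minimizer for each $\bar Y$, so the paper invokes a Borel argmin selector (Brown--Purves) to make your estimator measurable and its risk well defined.
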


Let \(p=r+c\). The all-scale profile, rather than one sharp sequence, makes the lower bound valid
for every sufficiently large integer \(R\). Choose \(\rho_R=aR^{-1/(2p)}\) and an attained
profile-minimizing pair. For a constant \(M_0\) independent of \(R\),
\begin{equation}
2\rho_R\le d_R=t_R+s_R\le4\rho_R,\qquad
\|\nu_{0,R}-\nu_{1,R}\|\le M_0\rho_R^p.
\end{equation}
Under the convention in \eqref{eq:rawmodel},
\begin{equation}
\operatorname{KL}\!\left(P_{\nu_{0,R}}^{\otimes R}\middle\|P_{\nu_{1,R}}^{\otimes R}\right)
=\frac{R\|\nu_{0,R}-\nu_{1,R}\|^2}{\sigma^2}
\le\frac{M_0^2a^{2p}}{\sigma^2}.
\end{equation}
Taking \(a\) small makes total variation at most a fixed \(\eta<1\), and the two-point inequality
gives
\begin{equation}
\mathcal R_R(\eps)\ge\frac{d_R^2}{8}(1-\eta)\ge c_{\eps,-}R^{-1/p}.
\label{eq:lecamrate}
\end{equation}
For the upper bound, \(\bar Y=\alpha h(t)+\bar W\), where \(\bar W\sim\CN(0,\sigma^2I_m/R)\). A
measurable minimum-distance selector on \(I_\eps\times\mathcal A\) satisfies
\begin{equation}
\begin{aligned}
\|\widehat\alpha h(\widehat t)-\alpha h(t)\|&\le2\|\bar W\|,\\
|\widehat t-t|^2&\le(2/C_\eps)^{2/p}\|\bar W\|^{2/p}.
\end{aligned}
\end{equation}
For \(Z\sim\CN(0,I_m)\), \(\|Z\|^2\sim\operatorname{Gamma}(m,1)\), hence
\begin{equation}
\mathbb E\|\bar W\|^{2/p}
=\left(\frac{\sigma^2}{R}\right)^{1/p}\frac{\Gamma(m+1/p)}{\Gamma(m)}.
\label{eq:complexmoment}
\end{equation}
Equations~\eqref{eq:lecamrate}--\eqref{eq:complexmoment} prove \eqref{eq:minimax}; exact aliases
instead give \(\mathcal R_R(\eps)\ge d_\eps^2/4\). Supplementary Lemma~10 supplies the selection
and constant details. This is a local signed result, and only the chosen binary subexperiment is
assigned the corresponding Gaussian limit. Nonregular rates and modulus methods have broad
antecedents \cite{Chernoyarov2018,HoNguyen2019,Donoho1994}; none supplies the annular full-pair
quotient law above.

The gain annulus is load-bearing. For
\begin{equation}
h(t)=t^r(1,t^2+t^3,t^5)^{\mathsf T},
\end{equation}
the strict-annulus exponent is \(r+5\), whereas fixed gain magnitude gives \(r+3\). At equal radius
the correct gain relation is \(\beta=(-1)^r\alpha\). Hence the annular theorem is not uniform as
\(A_+\downarrow A_-\).

\section{Uniform Information and the Local Discriminant}
At a non-dark response, evaluating the scalar least-squares problem in \eqref{eq:rho} gives
\begin{equation}
\rho_P(\omega)=\frac1{S_L}\left(\|Pg\|^2-
\frac{|(Pa)^*Pg|^2}{\|Pa\|^2}\right).
\end{equation}
This is positive precisely when the projected curve is projectively regular at \(\omega\). Darkness
requires a different limit and can make the frequency infimum unattained.

\begin{proposition}[Exact local discriminant]\label{prop:zeroset}
For every rank-\(m\) projector,
\begin{equation}
\begin{aligned}
F(P)>0\quad\Longleftrightarrow\quad
&Pa_L(\omega)\ne0\quad\forall\omega,\\
&\omega\mapsto[Pa_L(\omega)]\ \text{is immersive}.
\end{aligned}
\end{equation}
Consequently every design lies in exactly one class: \(F=0\) (local degeneracy), \(F>0\) with a
global alias, or regular-GI.
\end{proposition}

For a dark germ \(Pa_L(\omega_0+t)=t^rq(t)\), the identity \(Pg=h'-i\mu h\) and profiler
\(\lambda_t=r/t\) make the residual tend to zero; hence \(F=0\), even when the algebraic value of
\(\rho_P(\omega_0)\) is positive. Conversely, a non-dark immersion has a continuous positive
profile on the compact circle. Supplementary Lemma~15 supplies the full proof and also proves
continuity of \(F\) on the Grassmannian. Pointwise \(\rho_P(\omega)\) need not be continuous
through darkness; only the design functional \(F\) is asserted to be globally continuous.

\section{Exact Spectral Design}
The trace ceiling follows by taking \(\lambda=0\) in \eqref{eq:rho} and averaging over frequency:
\begin{equation}
F(P)\le\frac{\tr(PD^2)}{S_L}.
\end{equation}
For even \(m=2k\le L\), define the symmetric-edge set and its projector
\begin{equation}
\begin{aligned}
E_{L,m}&=\{0,\ldots,k-1\}\cup\{L-k,\ldots,L-1\},\\
Q_{L,m}&=P_{E_{L,m}}.
\end{aligned}
\end{equation}

\begin{theorem}[All-even spectral extremality]\label{thm:even}
For \(L\ge2\) and every even \(m\) with \(2\le m\le L\),
\begin{equation}
\Gamma_{L,m}=
\frac{m(m^2-3Lm+3L^2-1)}{L(L^2-1)}.
\label{eq:evenformula}
\end{equation}
The unique maximizing row-space projector is \(Q_{L,m}\). If \(m<L\), every rank-\(m\) projector
satisfies the rigidity bound
\begin{equation}
F(P)\le\Gamma_{L,m}-\frac{L-m}{S_L}\tr\!\big(P(I-Q_{L,m})\big),
\label{eq:rigidity}
\end{equation}
where \(\tr(P(I-Q))=\|P-Q\|_F^2/2\). For every even \(m\ge4\), \(Q_{L,m}\) is regular-GI and
uniquely attains \(\Gamma^{\GI}_{L,m}=\Gamma_{L,m}\).
\end{theorem}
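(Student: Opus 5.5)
The plan is to use the trace ceiling stated just before the theorem as the upper bound, to show that \(Q_{L,m}\) attains it with equality, and to get uniqueness and rigidity from a quantitative Ky Fan argument on the diagonal matrix \(D^2\).

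\emph{Upper bound.} The trace ceiling gives \(F(P)\le\tr(PD^2)/S_L\). Since \(D^2=\operatorname{diag}((n-\mu)^2)\), I will write \(p_n=P_{nn}\in[0,1]\), so that \(\sum_n p_n=m\) and \(\tr(PD^2)=\sum_n(n-\mu)^2p_n\). The weights \((n-\mu)^2\) are largest at the two ends of \(\{0,\ldots,L-1\}\) and are symmetric under \(n\mapsto L-1-n\). Hence the \(m=2k\) largest weights are exactly those indexed by \(E_{L,m}\). The smallest of them is \((\mu-k+1)^2\) and the largest weight off \(E_{L,m}\) is \((\mu-k)^2\). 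Their gap is \(2(\mu-k)+1=L-m\).

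\emph{Rigidity.} Writing \(\lambda_n=(n-\mu)^2\),
\[
\tr(QD^2)-\tr(PD^2)=\sum_{n\in E}\lambda_n(1-p_n)-\sum_{n\notin E}\lambda_n p_n .
\]
Because \(\sum_{n\in E}(1-p_n)=\sum_{n\notin E}p_n=\tr(P(I-Q))\), the right side is at least \((L-m)\,\tr(P(I-Q))\). Dividing by \(S_L\) gives \eqref{eq:rigidity}, once we know that \(\Gamma_{L,m}=\tr(QD^2)/S_L\). For \(m<L\), equality forces \(\tr(P(I-Q))=0\), that is \(P=Q\), which gives uniqueness. When \(m=L\) the only rank-\(L\) projector is \(I=Q_{L,L}\), so uniqueness is trivial.

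\emph{Attainment.} Since \(Q\) is a coordinate projector, \(\|Qa_L(\omega)\|^2=m\), so \(Q\) is never dark and the closed form for \(\rho_Q\) applies. We have \((Qa)^*Qg=i\sum_{n\in E}(n-\mu)=0\), because \(E\) is symmetric about \(\mu\), and \(\|Qg\|^2=\sum_{n\in E}(n-\mu)^2\) does not depend on \(\omega\). Hence \(\rho_Q\equiv\tr(QD^2)/S_L\), so \(F(Q)\) equals the ceiling and \(\Gamma_{L,m}=\tr(QD^2)/S_L\).

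\emph{Closed form.} It remains to check \eqref{eq:evenformula}, namely
\[
2\sum_{j=0}^{k-1}(\mu-j)^2\cdot\frac{12}{L(L^2-1)}=\frac{m(m^2-3Lm+3L^2-1)}{L(L^2-1)} .
\]
Expanding the left side gives \(2[k\mu^2-\mu k(k-1)+(k-1)k(2k-1)/6]\), and with \(\mu=(L-1)/2\), \(k=m/2\) this is \(m(m^2-3Lm+3L^2-1)/12\). As sanity checks, \(m=L\) recovers \(S_L\) and \(m=2\) gives \(6(L-1)/(L(L+1))\).

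\emph{GI claim.} For \(m=2k\ge4\), both indices \(0\) and \(1\) lie in \(E\). So the ratio of the corresponding coordinates of \(Qa_L(\omega)\) is \(e^{i\omega}\), which determines \(\omega\in\T\); hence \([Qa_L(\omega)]\) is injective. There is no darkness, and \(F(Q)>0\), so Proposition~\ref{prop:zeroset} gives immersivity. Thus \(Q\) is regular-GI. Since \(\Gamma^{\GI}_{L,m}\le\Gamma_{L,m}\) and the unconstrained maximizer \(Q\) lies in \(\mathfrak G^{\GI}_{L,m}\), we get equality and uniqueness.

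I expect the main obstacle to be bookkeeping rather than conceptual: getting the spectral gap \(L-m\) right at the boundary between edge and interior indices, including parity of \(L\) (when \(L\) is odd the middle weight is \(0\)), and verifying the polynomial identity. The Ky Fan step and the symmetry cancellation \((Qa)^*Qg=0\) are the substantive points.
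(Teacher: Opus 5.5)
Your proposal is correct and follows the paper's proof essentially step by step. The shared steps are the trace ceiling, the symmetric cancellation $(Qa_L)^*Qg=0$ giving pointwise equality at $Q_{L,m}$, the spectral gap $L-m$ yielding rigidity and uniqueness, and the adjacent coordinates $0,1$ certifying regular-GI for $m\ge4$. Your diagonal-entry computation with $p_n=P_{nn}$ is simply the explicit form of the paper's Ky Fan/spectral-budget step (available because $D^2$ is diagonal), and it gives the upper bound, rigidity, and uniqueness at once.
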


The trace ceiling contains two inequalities:
\begin{equation}
\begin{aligned}
F(P)&\le\frac1{2\pi}\int_0^{2\pi}\rho_P(\omega)d\omega\\
&\le\frac1{2\pi S_L}\int_0^{2\pi}\|Pg(\omega)\|^2d\omega
=\frac{\tr(PD^2)}{S_L}.
\end{aligned}
\label{eq:averagetracechain}
\end{equation}
The first compares worst frequency with average; the second tests \(\lambda=0\). For \(m=2k\), the
top entries of \(D^2\) occur in the complete symmetric pairs indexed by \(E_{L,m}\). Ky Fan's
principle \cite{KyFan1949} selects their span. At \(Q=Q_{L,m}\),
\begin{equation}
(Qa_L)^*Qg=i\sum_{n\in E_{L,m}}(n-\mu)=0,
\end{equation}
so the profiler is already orthogonal and both inequalities in \eqref{eq:averagetracechain} are
equalities at every frequency. Exact summation gives
\begin{equation}
F(Q)=\frac{2}{S_L}\sum_{j=0}^{k-1}(\mu-j)^2
=\frac{m(m^2-3Lm+3L^2-1)}{L(L^2-1)}.
\end{equation}
When \(m<L\), the selected/unselected spectral gap is exactly \(L-m\), and every competing
rank-\(m\) projector obeys
\begin{equation}
\begin{aligned}
\tr(QD^2)-\tr(PD^2)&\ge(L-m)\tr(P(I-Q))\\
&=\frac{L-m}{2}\|P-Q\|_F^2.
\end{aligned}
\end{equation}
This proves rigidity and shows that uniqueness is at the row-space projector level: invertible
output-coordinate changes of \(B\) represent the same whitened design. For \(m\ge4\), \(Q\) retains
coordinates \(0,1\); their ratio \(e^{i\omega}\) certifies non-darkness, injectivity, and
immersion. Supplementary Lemma~11 contains the equality details.

The even theorem leaves an odd-rank boundary that is nevertheless exactly solvable.
\begin{proposition}[Even-aperture co-rank one]\label{prop:corank}
If \(L\ge4\) is even and \(m=L-1\), then
\begin{equation}
\Gamma_{L,L-1}=1-\frac{3}{(L-1)^2(L+1)}.
\label{eq:corank}
\end{equation}
Exactly two row-space projectors maximize: those deleting \(e_{L/2-1}\) or \(e_{L/2}\). Both are
regular-GI; every nontrivial mixture of the two central kernel vectors is strictly worse.
\end{proposition}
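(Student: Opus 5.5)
Every rank-\((L-1)\) orthoprojector has the form \(P=I-uu^*\) for a unit vector \(u\in\C^L\). The plan is to compute \(\rho_P\) in closed form in terms of \(u\), bound the worst frequency below by a frequency average, and read off the equality cases.

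\emph{Step 1: closed form.} Write \(a=a_L(\omega)\), \(g=g(\omega)\), \(x=u^*a\), \(y=u^*g\). The full-aperture identities are \(\|a\|^2=L\), \(\|g\|^2=S_L\) and \(a^*g=0\). They give
\[
\|Pa\|^2=L-|x|^2,\qquad \|Pg\|^2=S_L-|y|^2,\qquad (Pa)^*Pg=-\bar x y .
\]
Substituting into the non-dark formula for \(\rho_P\) from Section~IV yields
\[
S_L\,\rho_P(\omega)=S_L-\frac{L\,|u^*g(\omega)|^2}{L-|u^*a(\omega)|^2}.
\]
Darkness occurs only when \(|x|^2=L\), i.e.\ when \(u\) is proportional to \(a_L(\omega)\). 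If that happens at some \(\omega\), then \(F(P)=0\) by Proposition~\ref{prop:zeroset}. We may therefore assume \(P\) is non-dark, so the formula holds on all of \(\T\). Since
\[
S_L\Big(1-\Gamma_{L,L-1}\Big)=\frac{L}{4(L-1)}
\]
with the claimed value of \(\Gamma_{L,L-1}\), the proposition is equivalent to the statement that
\[
\Phi_u(\omega):=4(L-1)\,|u^*Da_L(\omega)|^2+|u^*a_L(\omega)|^2
\]
satisfies \(\sup_\omega\Phi_u\ge L\), with equality exactly for \(u\in\{e_{L/2-1},e_{L/2}\}\) up to phase. This uses \(|u^*g|=|u^*Da|\) and the monotone rearrangement \(L|y|^2/(L-|x|^2)\ge L/(4(L-1))\iff\Phi_u\ge L\).

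\emph{Step 2: averaging bound.} Average \(\Phi_u\) over \(\omega\). The entries of \(a_L\) are orthonormal characters, so
\[
\frac1{2\pi}\int_0^{2\pi}\Phi_u\,d\omega=4(L-1)\,u^*D^2u+\|u\|^2 .
\]
Because \(L\) is even, the smallest eigenvalue of \(D^2\) is \(1/4\), and it is attained exactly on the span of \(e_{L/2-1}\) and \(e_{L/2}\), where \(n-\mu=\mp\tfrac12\). Hence the average, and therefore \(\sup_\omega\Phi_u\), is at least \((L-1)+1=L\). This gives
\[
F(P)=\inf_\omega\rho_P\le\Gamma_{L,L-1}
\]
for every rank-\((L-1)\) projector.

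\emph{Step 3: attainment.} For \(u=e_{L/2}\) or \(u=e_{L/2-1}\) we have \(|x|^2=1\) and \(|y|^2=1/4\) identically, so \(\Phi_u\equiv L\) and the bound is attained.

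\emph{Step 4: equality cases (the main obstacle).} Equality \(\sup\Phi_u=L\) requires two things: the average equals \(L\), which forces \(u=c_0e_{L/2-1}+c_1e_{L/2}\); and \(\Phi_u\) is constant in \(\omega\), since a continuous function with \(\sup=\) average must be constant. For such \(u\), with \(z=e^{i\omega}\),
\[
\Phi_u=(L-1)\,|\bar c_0-\bar c_1z|^2+|\bar c_0+\bar c_1z|^2=L+(4-2L)\operatorname{Re}\!\big(c_0\bar c_1 z\big).
\]
When \(L\ge4\) the coefficient \(4-2L\) is nonzero, so \(\Phi_u\) is constant only if \(c_0c_1=0\). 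Thus every nontrivial mixture has \(\sup\Phi_u>L\) and is strictly worse, and exactly the two coordinate deletions maximize. The delicate point to check here is that the averaging bound applies only to non-dark \(P\); dark designs are excluded separately because \(F=0<\Gamma_{L,L-1}\).

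\emph{Step 5: regular-GI.} Since \(L\ge4\), deleting a central coordinate retains coordinates \(0\) and \(1\). As in Theorem~\ref{thm:even}, their ratio \(e^{i\omega}\) certifies non-darkness, injectivity and immersion. So both optimizers are regular-GI.
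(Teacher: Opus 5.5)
Your argument is correct and is essentially the paper's proof: the same co-rank-one closed form, the same frequency-averaging bound through the central eigenvalue $1/4$ of $D^2$, and the same explicit computation on the central span. The only variant is cosmetic: you linearize the ratio into $\Phi_u$ and get the equality cases from ``supremum equals average forces constancy,'' whereas the paper uses a ratio-of-Haar-integrals bound and evaluates $\sup=(1+a_0)/(4(L-1+a_0))$ directly.

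One slip is in Step~5. For $L=4$, deleting $e_{L/2-1}=e_1$ removes coordinate $1$, so coordinates $0,1$ are not both retained. The regular-GI certificate must then use the surviving adjacent pair $\{2,3\}$, as the paper notes.
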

Write \(P=I-vv^*\), \(\|v\|=1\), and set \(A(\omega)=v^*a_L(\omega)\), \(G(\omega)=v^*g(\omega)\).
Dark designs have \(F=0\) and cannot maximize. Exact scalar profiling gives
\begin{equation}
F(I-vv^*)=1-\frac{L}{S_L}\sup_{\omega\in\T}
\frac{|G(\omega)|^2}{L-|A(\omega)|^2}.
\end{equation}
The supremum is at least the ratio of Haar integrals, so
\begin{equation}
\sup_\omega\frac{|G(\omega)|^2}{L-|A(\omega)|^2}
\ge\frac{v^*D^2v}{L-1}\ge\frac1{4(L-1)}.
\end{equation}
For even \(L\), equality forces \(v\) into the two central coordinates. Write
\(v=c_0e_{L/2-1}+c_1e_{L/2}\) and \(a_0=2|c_0c_1|\). Direct maximization gives
\begin{equation}
\sup_\omega\frac{|G(\omega)|^2}{L-|A(\omega)|^2}
=\frac{1+a_0}{4(L-1+a_0)},
\end{equation}
whose excess over \(1/[4(L-1)]\) is
\begin{equation}
\frac{a_0(L-2)}{4(L-1)(L-1+a_0)}.
\end{equation}
It vanishes exactly when \(c_0c_1=0\), proving both the value and the two-projector equality
classification. See Supplementary Proposition~12.

\begin{corollary}[Lossless ranks]
For \(L\ge2\),
\begin{equation}
\Gamma_{L,m}=1
\quad\Longleftrightarrow\quad
m=L\ \text{or}\ (L\ \text{is odd and }m=L-1).
\end{equation}
The proper lossless row space is the unique central deletion. It is regular-GI for odd \(L\ge5\).
At \(L=3,m=2\), it has response \([1:e^{2i\omega}]\): non-dark and immersive, but two-to-one.
\end{corollary}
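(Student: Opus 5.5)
Both directions of the equivalence come from the exact values already established, so the plan has three parts: the trace ceiling for necessity, the even and co-rank-one formulas for sufficiency, and a short direct analysis of the central deletion for the regular-GI claims.

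\emph{Necessity.} Suppose \(\Gamma_{L,m}=1\) with \(m<L\). By the trace ceiling \eqref{eq:averagetracechain}, \(F(P)\le \tr(PD^2)/S_L\). Ky Fan's principle says \(\tr(PD^2)\) is at most the sum of the \(m\) largest eigenvalues of \(D^2\). Since \(\tr(D^2)=S_L\), we get
\[
F(P)\le 1-\frac{\text{sum of the } L-m \text{ smallest entries of } D^2}{S_L}.
\]
The entries of \(D^2\) are \((n-\mu)^2\), and at most one of them vanishes: the entry \(n=\mu\), which exists only for odd \(L\). So \(\Gamma_{L,m}=1\) forces \(L-m\le1\), and when \(m=L-1\) it forces \(L\) odd. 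Alternatively, for even \(L\) and \(m=L-1\), Proposition~\ref{prop:corank} gives \(\Gamma<1\) directly.

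\emph{Sufficiency.} For \(m=L\) we have \(P=I\) and \(a_L^*g=0\), so \(\rho\equiv1\). For odd \(L\) and \(m=L-1\), take \(P\) to be the coordinate projector deleting \(e_\mu\). Then \(\tr(PD^2)=S_L\), and
\[
(Pa)^*Pg=i\sum_{n\ne\mu}(n-\mu)=0 .
\]
The profiler therefore vanishes and \(\rho_P\equiv1\) at every frequency, exactly as in the even-rank argument for Theorem~\ref{thm:even}. This gives \(F=1\).

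\emph{Uniqueness.} Equality in both inequalities of the ceiling forces \(\tr(PD^2)=S_L\). For odd \(L\) and \(m=L-1\), this forces \(P\) to annihilate the unique zero eigenvector \(e_\mu\). Hence \(P=I-e_\mu e_\mu^*\), the unique central deletion.

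\emph{Regular-GI.} Let \(L\ge5\) be odd. Then \(\mu\ge2\), so the central deletion keeps coordinates \(0\) and \(1\). Their ratio \(e^{i\omega}\) certifies non-darkness, injectivity and immersion, by the same argument used for \(m\ge4\) in Theorem~\ref{thm:even}.

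\emph{The case \(L=3\), \(m=2\).} The kept coordinates are \(0\) and \(2\), so the response is \([1:e^{2i\omega}]\). It is never dark, and it is immersive because the derivative of \(e^{2i\omega}\) never vanishes. It is two-to-one because \(\omega\) and \(\omega+\pi\) give the same line.

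The only delicate step is necessity: one must remember that, for even \(L\), every rank-\((L-1)\) projector misses a strictly positive part of the spectrum of \(D^2\). This is exactly what Ky Fan's inequality combined with the trace ceiling delivers.
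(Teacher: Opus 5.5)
Your proof is correct, but your necessity argument takes a different route from the paper's.

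The paper works pointwise. Since \(\rho_P(\omega)\le\|Pg(\omega)\|^2/S_L\le1\), a design with \(F(P)=1\) must satisfy \(Pg(\omega)=g(\omega)\) for every \(\omega\). Hence its row space contains \(\operatorname{span}_\omega g(\omega)=\operatorname{range}D\), which is \(\C^L\) for even \(L\) and the orthogonal complement of \(e_\mu\) for odd \(L\). To apply this to the supremum \(\Gamma_{L,m}=1\), the paper first needs an actual maximizer. It gets one from compactness together with the global continuity of \(F\) on the Grassmannian, which requires the bounded/divergent-profiler argument through darkness.

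You instead average first and apply Ky Fan. This gives the uniform bound \(\Gamma_{L,m}\le 1-\sigma_{L-m}/S_L\), where \(\sigma_{L-m}\) is the sum of the \(L-m\) smallest entries of \(D^2\). Because the bound holds for every rank-\(m\) projector, it controls the supremum directly. You therefore need neither attainment nor continuity of \(F\), and you also get a quantitative loss for every non-lossless rank.

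Your uniqueness step, \(\tr((I-P)D^2)=0\Rightarrow\ker P\subseteq\ker D\), carries the same content as the paper's containment \(\operatorname{range}D\subseteq\operatorname{range}P\).

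The paper's route is more structural: it identifies the tangent span once and reads off the dimension count and uniqueness together. Yours is more self-contained. You also check explicitly that the central deletion attains \(F=1\), via \((Pa)^*Pg=0\) and \(\tr(PD^2)=S_L\), whereas the paper leaves that sufficiency direction implicit.
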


Indeed, equality forces \(Pg(\omega)=g(\omega)\) for all \(\omega\); Fourier uniqueness shows that
the tangent span has dimension \(L\) for even \(L\) and \(L-1\) for odd \(L\). The \(L=3\) case
makes the local/global distinction exact: zero local information loss does not guarantee global
identification.

\emph{Boundary checks.}
With one output, every non-dark projective response lies in \(\mathbb{CP}^0\), so profiling removes
the scalar tangent; darkness also gives \(F=0\). Thus rank one supplies neither immersion nor GI.
For \(L=2,m=2\), full observation has response \([1:e^{i\omega}]\), \(F=1\), and regular-GI; the
rank-two alias and winding statements therefore begin at \(L\ge3\). Full rank always gives the
unique design \(P=I\) and \(F=1\). For odd \(L\), the proper lossless rank is \(m=L-1\), regular-GI
from \(L=5\), while \(L=3\) is the two-to-one exception. For even \(L\), no proper lossless rank
exists, but Proposition~\ref{prop:corank} solves the nearest rank \(L-1\).

\section{Global-Identification Price and Output Rank}
At rank two, Theorem~\ref{thm:even} gives
\begin{equation}
\Gamma_{L,2}=\frac{6(L-1)}{L(L+1)},
\end{equation}
uniquely at the endpoint plane. Its projective response is \([1:e^{i(L-1)\omega}]\), so it is
aliased for every \(L\ge3\). This does not mean two outputs cannot identify frequency: any
adjacent-coordinate plane has response \([1:e^{i\omega}]\) and \(F=1/(2S_L)>0\).

\begin{theorem}[Global-identification price and density]\label{thm:global}
For every \(L\ge3\),
\begin{equation}
\Pi_{L,2}\ge
\frac{3(L-2)(4L-1)}{L^3(L^2-1)}>0.
\label{eq:pricefloor}
\end{equation}
For every \(3\le m\le L\), regular-GI row spaces are open and dense and
\begin{equation}
\Gamma^{\GI}_{L,m}=\Gamma_{L,m}
\label{eq:suprema}
\end{equation}
as an equality of suprema. Exact attainment follows in the families proved above---full rank, every
even \(m\ge4\), odd-\(L\) central deletion, and the even-\(L\) co-rank-one case---but is not
asserted for unresolved proper odd ranks.
\end{theorem}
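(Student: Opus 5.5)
Two claims need separate arguments: the rank-two price floor \eqref{eq:pricefloor}, and the density/suprema statement \eqref{eq:suprema} for \(m\ge3\). For the second we first need continuity of \(F\) on the Grassmannian, which Supplementary Lemma~15 supplies; it also gives attainment of \(\Gamma_{L,m}\) by compactness.

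\emph{Rank two.} Every rank-two \(P=C^*C\) has a non-dark response \(h(\omega)=Ca_L(\omega)\in\C^2\), whose projectivization is a loop in \(\mathbb{CP}^1\). When \(Ca_L\) never vanishes, write \(h=(h_1,h_2)\) after a unitary output change. The map \(\omega\mapsto[h(\omega)]\) is then a closed curve on the Riemann sphere, given by the rational function \(h_2/h_1\) of \(z=e^{i\omega}\). Its topological degree/winding is an integer that is locally constant on the open set of non-dark rank-two designs. At the optimum \(Q_{L,2}\), the response \([1:e^{i(L-1)\omega}]\) has winding \(L-1\ge2\). A non-dark immersive loop that is injective must be an embedded circle in \(S^2\), which forces the relevant winding index to be \(\pm1\) (or \(0\)) about suitable points. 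More concretely, I would track the winding of \(h_2/h_1\) about \(0\) and \(\infty\): on a neighbourhood where \(h_1\) stays close to \(1\) and \(h_2\) close to \(e^{i(L-1)\omega}\), the loop \(h_2/h_1\) winds \(L-1\) times around \(0\) while staying near the unit circle. By the argument principle, such a loop has a self-intersection, since a closed curve in an annulus with winding number \(\ge2\) about the hole cannot be simple. So an explicit ball around \(Q_{L,2}\) consists of aliased designs. Its radius \(\delta\) is measured in \(\|P-Q\|_F\), via a perturbation bound on \(|h_2/h_1|\) staying in an annulus avoiding \(0\).

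\emph{Price floor from rigidity.} Outside that ball, \eqref{eq:rigidity} with \(m=2\) gives
\[
F(P)\le\Gamma_{L,2}-\frac{L-2}{2S_L}\|P-Q_{L,2}\|_F^2\le \Gamma_{L,2}-\frac{L-2}{2S_L}\delta^2 .
\]
Every GI design lies outside the ball, so \(\Pi_{L,2}\ge (L-2)\delta^2/(2S_L)\). The main obstacle is obtaining an explicit \(\delta\) that produces exactly \(3(L-2)(4L-1)/(L^3(L^2-1))\). Since \(6/(L(L^2-1))=1/(2S_L)\), this requires \(\delta^2=(4L-1)/L^2\) up to the stated normalization. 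I would try to derive this by bounding \(\sup_\omega\|(P-Q)a_L(\omega)\|\le\sqrt{L}\,\|P-Q\|_F\)-type estimates against \(\|Qa_L\|=\sqrt2\). The threshold should be chosen so that the perturbed entries cannot annihilate the winding, i.e.\ \(|h_2/h_1|\) stays in \((0,\infty)\) and the homotopy to \(e^{i(L-1)\omega}\) avoids \(0\). The constant \((4L-1)\) presumably comes from a sharper estimate on the off-diagonal coupling, which I would optimize last.

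\emph{Density for \(m\ge3\).} Openness: regular-GI is open because non-darkness and immersion are open by compactness of \(\T\). Injectivity is also open: near the diagonal it is controlled by the immersion lower bound, and off the diagonal by compactness of the separated pairs. Density: for \(m\ge3\), the response curve lives in \(\mathbb{CP}^{m-1}\) of real dimension \(2m-2\ge4\). A transversality count on secant incidences then applies. Darkness is a condition of real codimension \(2m\) in \((P,\omega)\) versus one parameter \(\omega\), and tangent degeneracy has codimension \(2m-2\) against one parameter. Aliasing requires a pair \((\omega,\omega')\) (two parameters) to satisfy codimension \(2m-2\ge4\) conditions. So by a parametric transversality/Sard argument on the compactified pair space (Fulton--MacPherson blow-up of the diagonal, treating near-diagonal collisions as tangent degeneracy), generic small perturbations of any \(P\) within the Grassmannian are regular-GI. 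I need the incidence map to be a submersion in the \(P\) variable; this holds because \(U(L)\) acts transitively on the Grassmannian and moves \(Pa_L(\omega), Pa_L(\omega')\) independently when \(a_L(\omega),a_L(\omega')\) are linearly independent. Then continuity of \(F\) and density give \(\Gamma^{\GI}_{L,m}=\Gamma_{L,m}\) as suprema, by approximating a maximizer. Attainment in the listed families is read off from Theorem~\ref{thm:even}, Proposition~\ref{prop:corank}, and the lossless corollary.
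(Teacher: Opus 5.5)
\textbf{Gap: the rank-two constant is never derived.} Your rank-two architecture matches the paper's: a non-GI winding chamber around \(Q_{L,2}\), followed by \eqref{eq:rigidity} at \(m=2\). But the constant in \eqref{eq:pricefloor} \emph{is} the quantitative radius of that chamber, and you leave it undetermined, so the proposal does not prove the displayed inequality. Three things are missing or wrong.

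First, you never fix the 2-frame in which \(h_1\approx1\) and \(h_2\approx z^{L-1}\); ``after a unitary output change'' is not enough. Your claim that a winding number is locally constant on all non-dark rank-two designs is also not meaningful. Loops in \(\mathbb{CP}^1\) carry no winding, and the winding of \(h_2/h_1\) requires both coordinates to be nonvanishing in a fixed frame.

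Second, comparing \(\sup_\omega\|(P-Q)a_L(\omega)\|\) with \(\|Q_{L,2}a_L\|=\sqrt2\) is a vector-norm comparison. It does not keep each coordinate within distance \(1\) of \(1\) and \(z^{L-1}\), which is what Rouch\'e needs. If you work coordinatewise with the naive bound \(|e_j^*(P-Q)a_L|\le\sqrt L\,\|P-Q\|_F\), you get only the chamber \(\|P-Q\|_F^2<1/L\). That yields the floor \(6(L-2)/(L^2(L^2-1))\), smaller than the stated one by the factor \(2L/(4L-1)\).

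Third, your bookkeeping target is off. Matching \eqref{eq:pricefloor} requires \(\delta^2=(4L-1)/(2L^2)\), i.e.\ \(\tr(P(I-Q))\ge\tau_L=(4L-1)/(4L^2)\). The factor \(4L-1\) does not come from a sharper coupling estimate; it comes from a slightly lossy one.

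\textbf{How to close it.} The missing idea is coordinatewise control through the largest principal angle in an aligned frame. The paper aligns an orthonormal frame \(V\) of \(\operatorname{range}P\) with \(U=[e_0,e_{L-1}]\) by the polar factor, so that \(\|V-U\|_{\rm op}=2\sin(\theta_{\max}/2)\). Cauchy--Schwarz then bounds each coordinate perturbation on \(S^1\) by \(2\sqrt L\sin(\theta_{\max}/2)\), and the condition \(4L\sin^2(\theta_{\max}/2)<1\) is exactly \(\sin^2\theta_{\max}<\tau_L\). Rouch\'e gives winding \(L-1\ge2\), and the Jordan index fact excludes injectivity. Finally, \(\tr(P(I-Q))=\sin^2\theta_1+\sin^2\theta_2\ge\sin^2\theta_{\max}\) turns the chamber into the floor via \eqref{eq:rigidity}.

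Your own route can also be repaired. When \(\theta_{\max}<\pi/2\), the pair \((e_0^*Pa_L,e_{L-1}^*Pa_L)^{\mathsf T}=(U^*V)V^*a_L\) is a projectively equivalent response. Its deviations from \((1,z^{L-1})\) are \(-((I-P)e_j)^*a_L\), and \(\|(I-P)e_0\|^2+\|(I-P)e_{L-1}\|^2=\tr(P(I-Q))=\|P-Q\|_F^2/2\). This gives the non-GI ball \(\|P-Q\|_F^2<2/L\) and a floor \(12(L-2)/(L^2(L^2-1))\), which dominates \eqref{eq:pricefloor}. That ball is the threshold your \(\sqrt2\) comparison would produce, but it holds only for this reason.

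\textbf{The \(m\ge3\) part.} Your density argument is an acceptable alternative to the paper's semialgebraic Schubert-incidence count, which has the extra benefit of giving closedness of the compactified bad set directly. Correct one point: the submersion does not come from \(Pa_L(\omega)\) and \(Pa_L(\omega')\) moving independently; they need not when \(\operatorname{range}P\) meets the secant plane. It comes from variations \(\dot C\kappa\), with \(0\ne\kappa\in\ker P\cap\operatorname{span}\{a_L(\omega),a_L(\omega')\}\), which fill the normal space of the rank-one condition at a non-dark alias.
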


The two mechanisms are different. For \(m=2\), align a row-space frame with the endpoint frame.
Inside \(\sin^2\theta_{\max}<(4L-1)/(4L^2)\), its two response polynomials are uniformly close to
\(1\) and \(z^{L-1}\). Rouch\'e's theorem preserves their disk-zero counts; their ratio therefore
has winding \(L-1\). An injective loop in \(\C^\times\) has index only \(0,\pm1\): a simple closed
curve bounds a Jordan domain, and the origin has index zero outside and \(\pm1\) inside. Thus this
chamber is non-GI. Combining its radius with \eqref{eq:rigidity} gives \eqref{eq:pricefloor};
Supplementary Lemma~13 gives the constants. The inequality is a sufficient strict chamber and a
price floor, not the exact constrained optimum.

For \(m\ge3\), compactify the physical two-frequency pair so secants extend to tangents. The
polynomial divided difference
\begin{equation}
b_n(z,w)=\frac{w^n-z^n}{w-z}=\sum_{j=0}^{n-1}w^{n-1-j}z^j,\qquad b_0=0,
\end{equation}
has \(b(z,z)=\partial_za(z)\). Let \(K=\ker P\), \(\dim_\C K=k_{\rm ker}=L-m\). For fixed
\(S=\operatorname{span}\{a(z),b(z,w)\}\), the Schubert condition \(K\cap S\ne0\) has complex
codimension \(m-1\); the deeper stratum \(S\subseteq K\) is smaller. Since
\begin{equation}
\dim_{\mathbb R}\Gr(k_{\rm ker},L)=2k_{\rm ker}m,
\end{equation}
adding the two real physical frequency variables yields
\begin{equation}
\dim_{\mathbb R}\mathcal Z^\circ\le2k_{\rm ker}m-2(m-1)+2
=2k_{\rm ker}m-2m+4.
\end{equation}
The design projection therefore has real codimension at least \(2m-4\); the one-real-dimensional
tangent boundary gives at least \(2m-3\). Using the physical two-real-dimensional pair is
essential: all complex secants answer a different design question. At \(m=3\), the bounds are two
and three, so the compact projected bad set has empty interior. At \(m=2\), the interior count
gives no exclusion, consistently with the open winding-protected bad chamber. Supplementary
Lemma~14 treats every Schubert stratum and proves that the complement is precisely regular-GI. The
closure is joint in design and pair; it does not assert that every critical point of one fixed
design is approached by aliases of that same design.

Finally, continuity of \(F\) requires a bounded/divergent-profiler split. If near-minimizing
\(\lambda_j\) stays bounded, ordinary continuity applies. If \( |\lambda_j|\to\infty\) while
residuals stay bounded, then \(P_ja(\omega_j)\to0\), the limit design is dark, and
Proposition~\ref{prop:zeroset} gives \(F=0\). Approximating a Grassmannian maximizer by dense
regular-GI designs then yields \eqref{eq:suprema}. Bare GI is not claimed open.

\begin{table}[t]
\caption{Output-rank landmarks \((L\ge3)\)}
\label{tab:phase}
\centering
\footnotesize
\setlength{\tabcolsep}{3.2pt}
\begin{tabular}{@{}p{0.15\columnwidth}p{0.26\columnwidth}p{0.47\columnwidth}@{}}
\toprule
Rank & Information law & Global-identification conclusion\\
\midrule
\(m=2\) & Exact \(\Gamma_{L,2}\) & GI exists, but price has the positive floor
\eqref{eq:pricefloor}; exact constrained optimum open.\\
\(m=3\) & General \(\Gamma\) not closed-form & Regular-GI dense; zero price as suprema. General
attainment open.\\
Even \(m\ge4\) & Exact \eqref{eq:evenformula} & Unique optimizer is regular-GI; exact attainment.\\
Even \(L\), \(m=L-1\) & Exact \eqref{eq:corank} & Two regular-GI central-deletion optimizers.\\
\bottomrule
\end{tabular}
\end{table}

\section{Discussion and Limits}
The local result is a phase law: finite opposite-projective contact is equivalent to local quotient
identifiability and consistency, whereas infinite contact
creates exact aliases. Radial order controls signal magnitude; optimized projective contact
controls how rapidly the signed branches separate after gain profiling. Neither order alone
suffices.

The global result separates three resource landmarks (Table~\ref{tab:phase}). Two outputs can
identify frequency, yet a neighborhood of the information optimum cannot. Three outputs make the
regular-GI constraint free only at the level of suprema. Four outputs give a universal exact
attaining design because the even spectral optimizer retains an adjacent pair; this is not a
theorem that four is the first attaining rank for every fixed \(L\).

Within the even-rank family, the exact marginal value of adding the next symmetric output pair is
\begin{equation}
\Gamma_{L,m+2}-\Gamma_{L,m}
=\frac{6(L-m-1)^2}{L(L^2-1)},\qquad m+2\le L.
\end{equation}
For \(m\ge3\), continuity, density, and openness also imply an operational strengthening of the
supremum statement: for every \(\eta>0\), some nonempty open set of regular-GI designs satisfies
\(F>\Gamma_{L,m}-\eta\). This does not imply that a maximizer belongs to that set.

Open problems include the exact rank-two GI-constrained optimum, arbitrary-row-space values and
attainment for proper residual odd ranks, the complete singular local limit experiment, exact
minimax constants after fixing a leading normal form, and the transition as the gain annulus
collapses. Passive hardware constraints and alternative noise placements require a different
resource model.

\section{Conclusion}
Under unknown complex gain, fixed compression has two distinct failure modes. Local singular
collision is governed by radial vanishing and optimized opposite-branch projective contact; global
aliasing is governed by the interaction of spectral optimizer geometry and projective incidence.
This yields a sharp local minimax phase law, exact even-rank and co-rank-one information designs, a
positive two-output global-identification price floor, equality of constrained and unconstrained
suprema from three outputs, and exact regular-GI attainment for the proved even-rank families.

\clearpage
\section*{Supplementary Material}
\setcounter{theorem}{7}
\setcounter{equation}{53}
\setcounter{section}{0}
\renewcommand{\thesection}{S\Roman{section}}
\renewcommand{\thesubsection}{\thesection.\Alph{subsection}}
\renewcommand{\theHsection}{supp.\arabic{section}}
This supplement proves every model-specific step referenced by the main manuscript. It uses the
main paper's notation and continues its theorem and equation numbering. The only imported facts are
stated where used: proper subanalytic minimization and one-variable preparation, measurable
selection of extrema, Ky Fan's trace principle, Rouch\'e's theorem and the Jordan index fact, and
semialgebraic projection dimension. No numerical check is used as proof.

\emph{Imported results in the precise form used.}
We use the following statements only. A proper fibrewise minimum of a continuous globally
subanalytic function over nonempty compact fibres is attained and globally subanalytic; a positive
one-variable globally subanalytic germ has a rational Puiseux leading power; and nonempty definable
fibres admit definable selections \cite{Tamm1981,Parusinski1994,Valette2025}. If \(X\) is standard
Borel, \(\Theta\) is compact metric, and \(f:X\times\Theta\to\mathbb R\) is Borel in \(x\) and
continuous in \(\theta\), its nonempty compact argmin correspondence has a Borel selector
\cite{BrownPurves1973}. If a Hermitian matrix has ordered eigenvalues
\(\lambda_1\ge\cdots\ge\lambda_L\), then
\(\max_{\operatorname{rank} P=m}\tr(PH)=\sum_{j=1}^m\lambda_j\); a strict cutoff
\(\lambda_m>\lambda_{m+1}\) makes the top spectral projector the unique maximizer \cite{KyFan1949}.
Rouch\'e is used only for polynomials holomorphic near the closed disk: \(|g|<|f|\) on \(S^1\)
preserves the number of disk zeros, counted with multiplicity, and excludes boundary zeros
\cite{Conway1978}. Finally, semialgebraic images are semialgebraic, semialgebraic maps do not
increase dimension, and projections of compact sets are compact \cite{Bochnak1998}. Every local
analytic function below is first restricted to a compact chart, as required by the subanalytic
statements.

\section{Analytic Contact and the Full-Pair Law}

\begin{lemma}[Contact preparation and annular alias dichotomy]\label{lem:supp-contact}
Let \(h(t)=t^rq(t)\), \(r\ge1\), where \(q\) is real analytic and \(q(0)\ne0\). For any fixed
\(\chi>1\), set
\begin{equation}
\delta_\chi(\rho)=\min_{\rho\le t,s\le \chi\rho}
d_{\rm ch}([q(t)],[q(-s)]).
\label{eq:supp-deltachi}
\end{equation}
After restriction to a compact analytic chart, either every \(\delta_\chi\) is eventually zero,
denoted \(c=\infty\), or all have the same finite rational leading exponent \(c\). In the zero
case, a fixed strict annulus contains gains producing exact opposite-branch mean collisions
arbitrarily near zero.
\end{lemma}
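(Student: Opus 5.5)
The plan is to reduce everything to \emph{definable arcs}. The first step is to check that, on a compact chart where \(q\ne0\), the function \((t,s)\mapsto d_{\rm ch}([q(t)],[q(-s)])\) is continuous and globally subanalytic. Each \(\delta_\chi\) is then a proper fibrewise minimum over the compact box \([\rho,\chi\rho]^2\). By the imported facts, it is attained, globally subanalytic in \(\rho\), and either eventually zero or has a rational Puiseux leading power \(c_\chi\). A definable selection of minimizers gives a minimizing arc \((t(\rho),s(\rho))\) with \(t(\rho)/\rho\to a\) and \(s(\rho)/\rho\to b\), where \(a,b\in[1,\chi]\); the limits exist by monotonicity of definable germs. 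The key reformulation, which I would prove first, is
\begin{equation*}
c_\chi=\sup\{\operatorname{ord}_\rho d_{\rm ch}([q(t(\rho))],[q(-s(\rho))])\},
\end{equation*}
with the supremum taken over definable arcs whose limiting ratios satisfy \(a,b\in[1,\chi]\). Here the order is \(+\infty\) when the distance vanishes identically. The inequality ``\(\le\)'' uses the minimizing arc itself. The inequality ``\(\ge\)'' holds because \(\delta_\chi(\rho)\) is bounded above by the value along any arc that stays in the window. One technical point is arcs with \(a=1\) or \(b=\chi\), which may exit the window slightly. I would handle these by rescaling \(\rho\) by a constant factor, which does not change orders.

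Next I would split arcs by their limiting ratio, using the affine chart \(z(t)=z(0)+v_{k_0}t^{k_0}+O(t^{k_0+1})\) from the main text. Near \([q(0)]\), \(d_{\rm ch}\) is comparable to \(|z(t)-z(-s)|\). Two facts then follow.
\begin{itemize}
\item Every arc has order at least \(k_0\), since \(|z(t)-z(-s)|\le C\rho^{k_0}\).
\item An arc with \(a\ne b\) has order exactly \(k_0\). Its leading coefficient is \(v_{k_0}(a^{k_0}-(-b)^{k_0})\rho^{k_0}\). This is nonzero: for odd \(k_0\) because \(a,b>0\), and for even \(k_0\) because \(a\ne b\).
\end{itemize}
So imbalanced arcs realize the smallest possible order and never increase the supremum. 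The supremum is therefore achieved, or approached, by balanced arcs with \(a=b\).

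A balanced arc with \(a=b=a_0\) can be placed in any window \([\rho',\chi'\rho']^2\) with \(\chi'>1\). To do this, reparametrize by \(\rho'=\min(t,s)\). The ratio \(\max(t,s)/\min(t,s)\) tends to \(1\), so it is eventually below \(\chi'\). The change of parameter \(\rho'\asymp\rho\) preserves orders. Hence the set of achievable balanced orders, and with it \(c_\chi\), is independent of \(\chi\). The case where some \(\delta_\chi\) is eventually zero is covered in the same way. It produces a minimizing arc with infinite order, hence an arc with \(q(t)\parallel q(-s)\) identically. By the previous step this arc must be balanced, so it fits every window and every \(\delta_{\chi'}\) is eventually zero. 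This gives the claimed dichotomy: either all \(\delta_\chi\) are eventually zero, or all share one finite rational exponent.

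For the annular alias statement, take such an arc of exact collisions \(q(t)=\lambda' q(-s)\), with \(t,s\to0^+\) and \(s/t\to1\). Continuity of \(q\) and \(q(0)\ne0\) force \(\lambda'\to1\). Fix an interior gain \(\alpha\) with \(|\alpha|=(A_-+A_+)/2\). Then \(\alpha h(t)=\beta h(-s)\) holds for
\begin{equation*}
\beta=\alpha\,\lambda'\,(t/(-s))^r,
\end{equation*}
and \(|\beta|/|\alpha|\to1\). So \(\beta\in\mathcal A\) for all small \(\rho\), which gives exact opposite-branch collisions arbitrarily near zero.

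I expect the main obstacle to be the arc reformulation of \(c_\chi\) in the first step. It needs attained definable minimizers, existence of the limits \(a,b\), and the careful treatment of arcs on the boundary of the window. The ratio analysis in the second step is then a direct jet computation.
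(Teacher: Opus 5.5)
\textbf{Verdict: right approach, one small but genuine gap.} Your route is essentially the paper's. Both use:
\begin{itemize}
\item Puiseux leading powers and definable choice for $\delta_\chi$;
\item the first projective jet $k_0$, whose parity analysis shows that arcs with $s/t\not\to1$ have order exactly $k_0$;
\item rebasing balanced arcs by $\min(t,s)$, which gives window independence;
\item the gain $\beta=\alpha\lambda'(-1)^r(t/s)^r$ with $|\beta|/|\alpha|\to1$.
\end{itemize}
Your sup-over-arcs reformulation repackages the paper's comparison of two windows by monotonicity plus a minimizing arc.

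\textbf{The missing case.} The projective germ $t\mapsto[q(t)]$ may be constant, for example $q(t)=(1+t)v$. Then $k_0$ does not exist and your jet computation says nothing. The conclusion $c=\infty$ is trivial here, but your step ``by the previous step this arc must be balanced'' fails. Every opposite-branch pair is now an exact collision, so a definable minimizing selection may well be $(t,s)=(\rho,\chi\rho)$. That arc gives $|\beta|/|\alpha|\to\chi^{-r}$, which can leave a narrow annulus. In this case you must choose $s=t$ explicitly, as the paper does when it writes ``unless the projective germ is constant, in which case take $s=t$.''

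\textbf{A secondary point.} Your constant-rescaling fix for boundary arcs does not work when the limiting ratios are exactly $\{1,\chi\}$. For example, $t=\rho$, $s=\chi\rho(1+\rho)$ has $s/t>\chi$ for every $\rho$, so no rescaling of $\rho$ puts it in the window. The claimed characterization of $c_\chi$ still holds, for a different reason. Such arcs are imbalanced and so have order exactly $k_0$, while every in-window arc, including the $\delta_\chi$-minimizer, already has order at least $k_0$. The justification should say this instead of appealing to rescaling.
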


\begin{proof}
Let \(v=q(0)\) and, after shrinking, define
\begin{equation}
\begin{aligned}
a_0(t)&=\frac{v^*q(t)}{\|v\|^2}\ne0,
&z(t)&=\frac{P_v^\perp q(t)}{a_0(t)},\\
q(t)&=a_0(t)(v+z(t)).
\end{aligned}
\end{equation}
For two chart offsets \(z_1,z_2\perp v\), put \(\widehat v=v/\|v\|\),
\(\zeta=z_1/\|v\|\), and \(\xi=z_2/\|v\|\). Since \(\zeta,\xi\perp\widehat v\),
direct expansion gives
\begin{equation*}
d_{\rm ch}([\widehat v+\zeta],[\widehat v+\xi])^2
=\frac{\|\zeta-\xi\|^2+\|\zeta\wedge\xi\|^2}
{(1+\|\zeta\|^2)(1+\|\xi\|^2)}.
\end{equation*}
Consequently, on \(\|\zeta\|,\|\xi\|\le R_{\rm ch}\),
\begin{equation*}
\frac{\|\zeta-\xi\|}{1+R_{\rm ch}^2}
\le d_{\rm ch}([\widehat v+\zeta],[\widehat v+\xi])
\le\sqrt{1+R_{\rm ch}^2}\,\|\zeta-\xi\|,
\end{equation*}
where \(\|\zeta\wedge\xi\|=\|\zeta\wedge(\xi-\zeta)\|\le R_{\rm ch}\|\xi-\zeta\|\). Thus one common
chart shrink gives uniform metric constants.
The following subanalytic facts are used \cite{Tamm1981,Parusinski1994,Valette2025}: a proper
fibrewise minimum of a globally subanalytic function is globally subanalytic; a positive
one-variable subanalytic germ has a Puiseux leading term \(C\rho^c+o(\rho^c)\), \(C>0\),
\(c\in\mathbb Q\); and a nonempty definable fibre admits a definable choice arc. Thus
\eqref{eq:supp-deltachi} is either eventually zero or has such a leading power.

If the projective germ is nonconstant, write
\begin{equation}
z(t)=z(0)+z_{k_0}t^{k_0}+O(t^{k_0+1}),\qquad z_{k_0}\ne0.
\label{eq:supp-firstjet}
\end{equation}
Then
\begin{equation}
z(t)-z(-s)=z_{k_0}\big(t^{k_0}-(-1)^{k_0}s^{k_0}\big)+O((t+s)^{k_0+1}).
\label{eq:supp-parity}
\end{equation}
Equal radii show that every finite contact exponent is at least \(k_0\). If \(k_0\) is odd,
\(t^{k_0}+s^{k_0}\asymp(t+s)^{k_0}\) on every comparison cone, so every window has exponent
\(k_0\). If \(k_0\) is even, an arc of order \(>k_0\) must have \(s/t\to1\); arcs bounded away from
that ratio have order \(k_0\). For \(1<\chi_1<\chi_2\), monotonicity gives one exponent inequality.
A minimizing arc in the \(\chi_2\)-window either has order \(k_0\), available in the
\(\chi_1\)-window, or satisfies \(s/t\to1\); rebasing by \(\widehat\rho=\min(t,s)\) puts it
eventually in every fixed \(\chi_1>1\). This proves window independence. If \(z\) is constant,
every window has exact equal-radius collision.

When \(c=\infty\), definable choice yields positive comparable arcs with \([q(t)]=[q(-s)]\).
Equation \eqref{eq:supp-parity} forces \(s/t\to1\) unless the projective germ is constant, in which
case take \(s=t\). Normalize a nonzero coordinate to write
\begin{equation}
q(t)=\theta q(-s),\qquad \theta\to1.
\end{equation}
For any \(\alpha_0\) in the interior of the strict annulus, set
\begin{equation}
\beta=\alpha_0(-1)^r(t/s)^r\theta.
\end{equation}
Then \(|\beta|/|\alpha_0|\to1\), so \(\beta\) is eventually admissible, and \(\alpha_0h(t)=\beta
h(-s)\).
\end{proof}

\begin{lemma}[Full-pair radial--contact law]
Under the finite branch of Lemma~\ref{lem:supp-contact}, with a fixed strict annulus and \(p=r+c\),
the uniform full-pair exponent is exactly \(p\). Moreover the all-scale profile in the main paper
satisfies \(\Delta_{\mathcal A}(\rho)=C_{\mathcal A}\rho^p+o(\rho^p)\), \(C_{\mathcal A}>0\). The
fixed-magnitude witness in the main paper has exponent \(r+3\), rather than the strict-annulus
exponent \(r+5\).
\end{lemma}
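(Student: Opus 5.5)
We prove the lower bound \(\|\alpha h(t)-\beta h(u)\|\ge C|t-u|^p\) uniformly over \(t,u\in[-\eps,\eps]\) and \(\alpha,\beta\in\mathcal A\) with a priority cover of the pair set. Sharpness and the all-scale profile then come from a definable minimizing arc. The witness is a direct computation at the end.

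Write \(q(t)=a_0(t)(v+z(t))\) as in the proof of Lemma~\ref{lem:supp-contact}. The starting point is the profile identity \eqref{eq:profileidentity}:
\[
\inf_\beta\|\alpha h(t)-\beta h(u)\|=|\alpha|\,\|h(t)\|\,d_{\rm ch}([q(t)],[q(u)]).
\]
Set \(M=\max\{|t|,|u|\}\), so \(|t-u|\le 2M\), and cover the pairs by four strata.

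\emph{Central and radially imbalanced pairs.} Suppose \(\min\{|t|,|u|\}\le\kappa M\) for a small fixed \(\kappa\). Then \(\|h\|\asymp|\cdot|^r\), and the reverse triangle inequality gives
\[
\|\alpha h(t)-\beta h(u)\|\ge A_-\|h(\text{larger})\|-A_+\|h(\text{smaller})\|\gtrsim M^r.
\]
This holds once \(\kappa^r A_+<A_-/2\) times the ratio of the \(|q|\) bounds; this is exactly where the strictness \(A_->0\) and the bound \(A_+<\infty\) are used. Since \(M^r\gtrsim|t-u|^r\ge|t-u|^p\), this stratum is done.

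\emph{Balanced same-side pairs.} Here \(t,u\) have the same sign with \(\kappa M\le |t|,|u|\le M\). Use the first-jet estimate \(\|z(t)-z(u)\|\ge c_0M^{k_0-1}|t-u|\) together with the chart metric equivalence from Lemma~\ref{lem:supp-contact}. The profiled distance is then at least
\[
M^r\cdot M^{k_0-1}|t-u|\gtrsim |t-u|^{r+k_0}\ge |t-u|^{p},
\]
because \(|t-u|\le 2M\), \(k_0\le c\), and \(|t-u|\) is small. If \(z\) is constant, then \(c=\infty\), so this case does not arise in the finite branch.

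\emph{Balanced opposite-side pairs.} Here \(t=\rho\), \(u=-s\) with \(s/t\in[\kappa,\kappa^{-1}]\). Cover this by finitely many windows \([\rho,\chi\rho]\), rebasing as needed. By the window independence in Lemma~\ref{lem:supp-contact}, \(\delta_\chi(\rho)\ge C\rho^c\). Combined with \(\|h(t)\|\asymp\rho^r\) and \(|t-u|=t+s\asymp\rho\), the profiled bound gives order \(\rho^{r+c}\asymp|t-u|^p\).

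Taking the minimum of the constants over the finitely many strata gives \(C_\eps\). The implication \((i)\Rightarrow\) collision-freeness is immediate from this bound. The converse, that \(c=\infty\) produces collisions, is already in Lemma~\ref{lem:supp-contact}.

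For sharpness and the profile \(\Delta_{\mathcal A}\), note that the minimand is subanalytic over the compact fibre \([\rho,2\rho]^2\times\mathcal A^2\). Proper minimization therefore makes \(\Delta_{\mathcal A}\) a subanalytic germ, and the bounds above give \(\Delta_{\mathcal A}(\rho)\gtrsim\rho^p\).

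The upper bound is the main obstacle, because the least-squares competitor must lie in the annulus. I would take a definable arc \((t(\rho),s(\rho))\) attaining \(\delta(\rho)\asymp\rho^c\) and fix an interior \(\alpha_0\). The optimal \(\beta_*\) then satisfies
\[
\beta_*=\alpha_0\,\frac{h(-s)^*h(t)}{\|h(-s)\|^2}.
\]
If \(s/t\to1\) along the arc, then \(|\beta_*|/|\alpha_0|\to1\) because \(d_{\rm ch}\to0\) and the norms match to leading order. If instead \(s/t\) stays away from \(1\), then by the parity discussion \(k_0\) is odd or the arc has order \(k_0=c\). In that case I would replace the arc by the equal-radius one \(s=t\), which attains the same order \(k_0\), and get \(|\beta_*|/|\alpha_0|\to1\) there as well. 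Either way \(\beta_*\in\mathcal A\) eventually, so \(\Delta_{\mathcal A}(\rho)\lesssim\rho^p\). The Puiseux theorem then upgrades the two-sided \(\asymp\rho^p\) bound to \(C_{\mathcal A}\rho^p+o(\rho^p)\).

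For the witness \(h=t^r(1,t^2+t^3,t^5)\), take \(s=t\). The quadratic coordinate differs by \(2t^3\), so the equal-radius profiled distance is \(\asymp\rho^{r+3}\), with \(\beta\) of modulus ratio near \(1\). The annular minimum is \(r+5\). To get it, solve
\[
t^2+t^3=s^2-s^3,
\]
giving \(s=t+t^2+O(t^3)\). This cancels the middle coordinate, leaving the \(t^5+s^5\) term. The gain ratio \((t/s)^r\to1\) is admissible in the annulus but is not of unit modulus. With fixed magnitude the ratio must equal \(1\), which forces \(s=t\) up to \(o(t^2)\), leaving the \(t^3\) discrepancy, hence \(r+3\).
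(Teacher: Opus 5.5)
Your proposal is correct and follows essentially the same route as the paper. It uses the same priority cover (you absorb the \(tu=0\) stratum into the radially imbalanced one), the line-profile identity, first-jet control of balanced same-side pairs, window independence for balanced opposite pairs, sharpness through the least-squares gain with \(|\beta_*|/|\alpha_0|\to1\) (switching to equal radii when \(c=k_0\)), two-sided bounds plus Puiseux for \(\Delta_{\mathcal A}\), and the same witness computation. The differences are cosmetic: you take the sharp arc directly in the \([\rho,2\rho]\) window rather than rebasing by \(\min(t,s)\), and in the fixed-magnitude witness you use the weaker but sufficient consequence \(s=t+o(t^2)\); note only that the equal-radius competitor there must be exactly \(\beta=(-1)^r\alpha\), not merely of modulus ratio near \(1\).
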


\begin{proof}
Shrink the chart so that \(0<n_-\le\|q(t)\|\le n_+\). Choose \(0<\kappa<1/2\) satisfying
\begin{equation}
A_+n_+\kappa^r\le\tfrac12A_-n_-.
\label{eq:supp-kappa}
\end{equation}
Fix this \(\kappa\) once and assign each \((t,u)\in[-\eps,\eps]^2\) to the first applicable set
\begin{align*}
\mathsf C_0&=\{tu=0\},\\
\mathsf C_1&=\{tu\ne0,\ \min(|t|,|u|)\le\kappa M\},\\
\mathsf C_2&=\{tu>0,\ \min(|t|,|u|)>\kappa M\},\\
\mathsf C_3&=\{tu<0,\ \min(|t|,|u|)>\kappa M\},
\end{align*}
where \(M=\max(|t|,|u|)\). These sets are disjoint under the priority convention and exhaustive,
including oscillatory radius ratios. In \(\mathsf C_2\cup\mathsf C_3\), the radii differ by at most
\(\chi=\kappa^{-1}\).

For nonzero \(x,y\), completion of the square gives
\begin{equation*}
\begin{aligned}
\|\alpha x-\beta y\|^2
&=\|y\|^2\left|\beta-\alpha\frac{y^*x}{\|y\|^2}\right|^2\\
&\quad+|\alpha|^2\left(\|x\|^2-\frac{|y^*x|^2}{\|y\|^2}\right).
\end{aligned}
\end{equation*}
Hence
\begin{equation*}
\inf_{\beta\in\C}\|\alpha x-\beta y\|
=|\alpha|\,\|x\|d_{\rm ch}([x],[y]),
\qquad \beta_*=\alpha\frac{y^*x}{\|y\|^2}.
\end{equation*}
Every admissible \(\beta\) therefore has residual at least the unrestricted line distance;
substituting \(x=h(t)\), \(y=h(-s)\) also gives \eqref{eq:supp-betastar} with its parity and radius
factors.

If \(tu=0\), the nonzero mean has norm at least \(A_-n_-|t-u|^r\), which dominates a constant times
\(|t-u|^{r+c}\) locally. If \(M=\max(|t|,|u|)\) and \(\min(|t|,|u|)\le\kappa M\), the reverse
triangle inequality and \eqref{eq:supp-kappa} give
\begin{equation}
\|\alpha h(t)-\beta h(u)\|\ge\tfrac12A_-n_-M^r
\gtrsim |t-u|^{r+c}.
\end{equation}

For balanced same-side pairs, choose a real linear functional \(\ell\) with \(\ell(z_k)>0\).
Equation \eqref{eq:supp-firstjet} makes \(\ell\circ z\) strictly monotone on each half-axis and
gives, uniformly,
\begin{equation}
d_{\rm ch}([q(t)],[q(u)])\gtrsim M^{k_0-1}|t-u|.
\end{equation}
The exact line-profile identity then yields
\begin{equation}
\|\alpha h(t)-\beta h(u)\|
\gtrsim M^{r+k_0-1}|t-u|
\gtrsim|t-u|^{r+k_0}\ge|t-u|^{r+c},
\end{equation}
because \(c\ge k_0\) and \(0<|t-u|<1\).

For balanced opposite-side pairs, write \(u=-s\), \(t,s>0\). Balance places \((t,s)\) in one fixed
contact window. The prepared positive function gives
\begin{equation}
d_{\rm ch}([q(t)],[q(-s)])\gtrsim(t+s)^c.
\end{equation}
Profiling over all complex gains can only reduce the distance from an admissible competitor; hence,
by the line-profile identity,
\begin{equation}
\begin{aligned}
\|\alpha h(t)-\beta h(-s)\|
&\ge |\alpha|\,\|h(t)\|d_{\rm ch}([q(t)],[q(-s)])\\
&\gtrsim(t+s)^{r+c}.
\end{aligned}
\end{equation}
The minimum of the finitely many constants proves the full-pair lower bound.

All estimates use one common \(0<\eps\le\eps_0\) for which the chart, monotonicity, preparation,
and \(2\eps<1\) bounds hold. If \(C_j>0\) is the constant on \(\mathsf C_j\), then
\(C_\eps=\min_jC_j>0\) applies to every pair: for \(0\le d<1\), \(d^r\ge d^{r+c}\) and
\(d^{r+k_0}\ge d^{r+c}\) because \(c\ge k_0\). A sequence crossing cases cannot destroy uniformity.

For sharpness, the unrestricted least-squares gain equals
\begin{equation}
\beta_*(t,s;\alpha)=\alpha(-1)^r(t/s)^r
\frac{q(-s)^*q(t)}{\|q(-s)\|^2}.
\label{eq:supp-betastar}
\end{equation}
A contact-minimizing arc of order \(c>k_0\) has \(s/t\to1\); if \(c=k_0\), equal radii provide an
arc of the same order. In either case one may take a sharp arc with \(|\beta_*|/|\alpha|\to1\). An
interior true gain therefore makes \(\beta_*\) admissible, and the separation is \(O((t+s)^{r+c})\).

The compact minimum \(\Delta_{\mathcal A}\) is positive at every sufficiently small scale, or
finite-branch injectivity would fail. It is a one-variable globally subanalytic function, so
\(\Delta_{\mathcal A}=C_{\mathcal A}\rho^{p_\Delta}+o(\rho^{p_\Delta})\). The lower bound gives
\(p_\Delta\le p\). Rebase the sharp arc by \(\rho=\min(t,s)\); its ratio tends to one, so both
radii lie in \([\rho,2\rho]\), giving \(p_\Delta\ge p\). Thus \(p_\Delta=p\).

Finally take \(h(t)=t^r(1,t^2+t^3,t^5)^{\mathsf T}\). In the strict annulus, put \(s=t\upsilon(t)\)
and solve
\begin{equation}
\upsilon^2-t\upsilon^3=1+t,\qquad
\upsilon=1+t+t^2+2t^3+O(t^4).
\end{equation}
The second coordinate is then matched by a gain-magnitude adjustment and the remaining projective
residual is \(t^5+s^5=\Theta(t^5)\), giving \(r+5\). Under fixed magnitude, an \(o(t^3)\) opposite
residual would force \(s/t=1+o(t^3)\) and unit phase \(1+o(t^3)\), but the second coordinate would
remain \(2t^3+o(t^3)\). Equal radii attain order \(r+3\) with \(\beta=(-1)^r\alpha\).
\end{proof}

\section{Local Signed Minimax Conversion}

\begin{lemma}[Gaussian modulus-to-risk conversion]
For the fixed-neighborhood whitened experiment of the main paper, finite \(p=r+c\) implies local
signed minimax mean-square risk \(\Theta(R^{-1/p})\) for every sufficiently large integer \(R\).
Infinite contact prevents consistency.
\end{lemma}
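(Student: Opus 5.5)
The argument has two halves: a lower bound from Le~Cam's two-point method applied to the all-scale profile \(\Delta_{\mathcal A}\), and an upper bound from a minimum-distance estimator controlled by the full-pair inequality \eqref{eq:fullpair}. The infinite-contact claim follows from the exact-alias branch of Lemma~\ref{lem:supp-contact}.

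\emph{Lower bound.} Fix \(\eps\le\eps_0\) and an interior gain \(\alpha_0\in\mathcal A\). From the preceding lemma, \(\Delta_{\mathcal A}(\rho)=C_{\mathcal A}\rho^p+o(\rho^p)\). Moreover, the minimum is attained by a pair \((t_\rho,s_\rho,\alpha_\rho,\beta_\rho)\) with \(t_\rho,s_\rho\in[\rho,2\rho]\) and both gains in \(\mathcal A\), by compactness. Hence
\[
\|\alpha_\rho h(t_\rho)-\beta_\rho h(-s_\rho)\|\le M_0\rho^p
\]
for all small \(\rho\), and \(d_\rho=t_\rho+s_\rho\in[2\rho,4\rho]\).

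Set \(\rho_R=aR^{-1/(2p)}\). For all sufficiently large integer \(R\) we have \(2\rho_R\le\eps\), so both hypotheses lie in \(I_\eps\times\mathcal A\). The Gaussian KL identity gives
\[
\mathrm{KL}\le R M_0^2\rho_R^{2p}/\sigma^2=M_0^2a^{2p}/\sigma^2 .
\]
Pinsker then bounds the total variation by some \(\eta<1\) once \(a\) is chosen small. Le~Cam's two-point bound \(\max_j\mathbb E_j(\widehat t-t_j)^2\ge (d_R/2)^2(1-\eta)/2\) yields the lower bound \(\gtrsim R^{-1/p}\). Using the all-scale profile rather than a single sharp sequence is what makes this hold for every large \(R\), not just along a subsequence.

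\emph{Upper bound.} By sufficiency I reduce to \(\bar Y=\alpha h(t)+\bar W\) with \(\bar W\sim\CN(0,\sigma^2I_m/R)\). The map \((t,\alpha)\mapsto\|\bar Y-\alpha h(t)\|\) is continuous on the compact set \(I_\eps\times\mathcal A\) and Borel in \(\bar Y\). Brown--Purves therefore gives a Borel minimizer \((\widehat t,\widehat\alpha)\). The triangle inequality gives \(\|\widehat\alpha h(\widehat t)-\alpha h(t)\|\le 2\|\bar W\|\), and \eqref{eq:fullpair} then gives
\[
|\widehat t-t|^2\le (2\|\bar W\|/C_\eps)^{2/p}.
\]
The Gamma moment \eqref{eq:complexmoment} bounds the expectation by \(\mathrm{const}\cdot R^{-1/p}\), uniformly over \((t,\alpha)\). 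This is where uniformity of \(C_\eps\) over both gains and all pair strata is essential.

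\emph{Infinite contact.} Lemma~\ref{lem:supp-contact} provides \(t,s>0\) in \(I_\eps\) and admissible \(\alpha_0,\beta\) with \(\alpha_0h(t)=\beta h(-s)\). The two parameter points induce identical laws for every \(R\) and are separated by \(d_\eps=t+s>0\). Any estimator therefore has
\[
\max\mathbb E(\widehat t-t_j)^2\ge d_\eps^2/4,
\]
and no estimator is consistent.

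The main obstacle is the lower bound's admissibility bookkeeping. The profile minimizer must use gains inside \(\mathcal A\), not the unrestricted least-squares gain, and both hypotheses must stay in \(I_\eps\) for every large \(R\). Defining \(\Delta_{\mathcal A}\) with the annulus built in, and relying on its Puiseux law from the preceding lemma, handles both requirements.
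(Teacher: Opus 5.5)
Your proposal is correct and follows the paper's proof essentially step for step. For the lower bound you use the attained annulus-constrained minimizer of \(\Delta_{\mathcal A}\) at scale \(\rho_R=aR^{-1/(2p)}\), the Gaussian KL identity, Pinsker, and the two-point bound \(d_R^2(1-\eta)/8\); for the upper bound, a Brown--Purves minimum-distance selector on the sufficient mean, the full-pair modulus, and the Gamma moment; and for \(c=\infty\), an exact alias pair giving \(d_\eps^2/4\). Your explicit check that \(2\rho_R\le\eps\) for large \(R\), so both hypotheses lie in \(I_\eps\), is a small piece of bookkeeping the paper leaves implicit.
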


\begin{proof}
For \(Y\sim\CN(\mu,\sigma^2I_m)\), direct integration gives
\begin{equation}
\operatorname{KL}(P_\mu\|P_\nu)=\frac{\|\mu-\nu\|^2}{\sigma^2}.
\end{equation}
Choose \(M_0,\rho_0>0\) with \(\Delta_{\mathcal A}(\rho)\le M_0\rho^p\) for \(0<\rho\le\rho_0\).
Fix \(0<\eta<1\), take \(a>0\) so \(M_0a^p/(\sigma\sqrt2)\le\eta\), and for every large \(R\) put
\(\rho_R=aR^{-1/(2p)}\). An attained profile minimizer has signed separation
\(d_R\in[2\rho_R,4\rho_R]\) and \(R\)-sample divergence at most \(M_0^2a^{2p}/\sigma^2\). The
chosen inequality makes the total KL at most \(2\eta^2\), so Pinsker gives total variation at most
\(\eta\). If \(A\) is the event on which an estimator is closer to the first parameter point, its
squared error is at least \(d_R^2/4\) on \(A^c\) under the first law and on \(A\) under the second.
Therefore
\begin{equation*}
\max_{i=0,1}E_i(\widehat t-t_i)^2
\ge\frac{d_R^2}{8}\{P_0(A^c)+P_1(A)\}
\ge\frac{d_R^2}{8}(1-\operatorname{TV}),
\end{equation*}
which supplies the displayed constant below.
\begin{equation}
\mathcal R_R(\eps)\ge\frac{d_R^2}{8}(1-\eta)
\ge\frac{1-\eta}{2}a^2R^{-1/p}.
\end{equation}
The all-scale profile makes this an all-large-\(R\) statement, not a subsequence.

For the upper bound, the sufficient sample mean is
\begin{equation}
\bar Y=\alpha h(t)+\bar W,\qquad
\bar W\sim\CN(0,\sigma^2I_m/R).
\end{equation}
Sufficiency follows from
\begin{equation*}
\sum_{j=1}^R\|Y_j-\mu\|^2
=\sum_{j=1}^R\|Y_j-\bar Y\|^2+R\|\bar Y-\mu\|^2,
\end{equation*}
whose first term is independent of \(\mu\). Let \(\Theta=I_\eps\times\mathcal A\), a compact metric
space, and \(f(y;u,\beta)=\|y-\beta h(u)\|\). Its continuous value \(v(y)=\min_\Theta f(y;\cdot)\)
gives a Borel argmin graph with nonempty compact sections; the stated selection theorem yields a
Borel selector without randomization.
Let \((\widehat t,\widehat\alpha)\) denote that selector. Its optimality and the triangle
inequality give
\begin{equation}
\|\widehat\alpha h(\widehat t)-\alpha h(t)\|\le2\|\bar W\|.
\end{equation}
The full-pair modulus then implies
\begin{equation}
|\widehat t-t|^2\le(2/C_\eps)^{2/p}\|\bar W\|^{2/p}.
\end{equation}
For \(Z\sim\CN(0,I_m)\), \(\|Z\|^2\sim\operatorname{Gamma}(m,1)\), whence
\begin{equation}
\mathbb E\|Z\|^{2/p}=\frac{\Gamma(m+1/p)}{\Gamma(m)}.
\end{equation}
Taking expectations proves
\begin{equation}
\mathcal R_R(\eps)\le
(2/C_\eps)^{2/p}\sigma^{2/p}
\frac{\Gamma(m+1/p)}{\Gamma(m)}R^{-1/p}.
\end{equation}
If \(c=\infty\), Lemma~\ref{lem:supp-contact} supplies in every fixed interval a pair
\(t_\eps,-s_\eps\) separated by \(d_\eps>0\) with the same law \(P\). With midpoint \(\bar
t_\eps=(t_\eps-s_\eps)/2\), every estimator satisfies
\begin{equation*}
\frac12E_P\!\left[(\widehat t-t_\eps)^2+(\widehat t+s_\eps)^2\right]
=E_P(\widehat t-\bar t_\eps)^2+\frac{d_\eps^2}{4}
\ge\frac{d_\eps^2}{4}.
\end{equation*}
The constant midpoint estimator attains equality for the two-point subproblem.
\end{proof}

\section{Spectral Extremality and Equality Cases}

\begin{lemma}[All-even extremum, rigidity, and losslessness]
The value, unique projector, rigidity inequality, regular-GI attainment, and lossless-rank
classification stated in Theorem~4 and Corollary~6 of the main paper hold.
\end{lemma}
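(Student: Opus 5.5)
The plan is to prove the four claims (value, uniqueness, rigidity, regular-GI attainment, and the lossless classification) in that order, all from the trace-ceiling chain \eqref{eq:averagetracechain}.

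\textbf{Value and uniqueness.} For any rank-\(m\) projector, dropping the profiler (\(\lambda=0\)) gives \(\rho_P(\omega)\le\|Pg(\omega)\|^2/S_L\). Averaging over \(\omega\) uses
\[
\frac1{2\pi}\int_0^{2\pi}g(\omega)g(\omega)^*\,d\omega=D^2,
\]
which holds because the off-diagonal entries carry \(e^{i(n-n')\omega}\). This yields \(F(P)\le\tr(PD^2)/S_L\).

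The diagonal entries of \(D^2\) are \((n-\mu)^2\), symmetric about the center and decreasing toward it. For \(m=2k\), the top \(m\) entries are exactly the indices in \(E_{L,m}\). When \(m<L\) there is a strict cutoff: the smallest selected value is \((\mu-k+1)^2\) and the largest unselected value is \((\mu-k)^2\). Their difference is
\[
(\mu-k+1)^2-(\mu-k)^2=2\mu-2k+1=L-m>0.
\]
Ky Fan then makes \(Q_{L,m}\) the unique maximizer of \(\tr(PD^2)\). Checking \(Q\) attains the bound takes two facts. First, \((Qa)^*Qg=i\sum_{E}(n-\mu)=0\) by symmetry, so the profiler vanishes. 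Second, \(\|Qg(\omega)\|^2\) is constant in \(\omega\). Together these force \(\rho_Q\equiv\tr(QD^2)/S_L\), and the closed form \eqref{eq:evenformula} follows by summing \(2\sum_{j<k}(\mu-j)^2\).

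Uniqueness of the \(F\)-maximizer follows from the rigidity bound below, since \(\tr(P(I-Q))=0\) forces \(P=Q\).

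\textbf{Rigidity.} Diagonalize \(D^2\) in the standard basis. Split \(\tr(QD^2)-\tr(PD^2)\) into a loss on \(E\) and a gain off \(E\). Since \(\tr P=\tr Q=m\), the weight \(P\) removes from \(E\) equals the weight it places off \(E\), namely \(\tr(P(I-Q))\). That weight is moved from eigenvalues at least \((\mu-k+1)^2\) to eigenvalues at most \((\mu-k)^2\), so
\[
\tr(QD^2)-\tr(PD^2)\ge(L-m)\tr(P(I-Q)).
\]
Combining with \(F(P)\le\tr(PD^2)/S_L\) and \(\Gamma_{L,m}=\tr(QD^2)/S_L\) gives \eqref{eq:rigidity}. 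The identity \(\|P-Q\|_F^2=2m-2\tr(PQ)=2\tr(P(I-Q))\) is immediate.

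\textbf{Regular-GI attainment.} For \(m\ge4\), \(k\ge2\), so \(E\) contains the indices \(0\) and \(1\).
\begin{itemize}
\item The coordinate-\(0\) entry of \(Qa_L(\omega)\) is \(1\), so the response is never dark.
\item The ratio of coordinates \(1\) and \(0\) is \(e^{i\omega}\), which is injective on \(\T\). Hence \(\omega\mapsto[Qa_L(\omega)]\) is injective.
\item This ratio has nonzero derivative, so the map is also immersive.
\end{itemize}
Thus \(Q\) is regular-GI, and \(\Gamma^{\GI}_{L,m}\ge F(Q)=\Gamma_{L,m}\ge\Gamma^{\GI}_{L,m}\). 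Uniqueness within the GI class is inherited from the global uniqueness.

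\textbf{Lossless classification.} This is the step I expect to be the main obstacle. The claim \(\Gamma=1\) requires \(F(P)=1\), hence \(\tr(PD^2)=S_L=\tr D^2\). That forces \(P\) to contain the span of all \(e_n\) with \(n\ne\mu\), and I would argue this via Ky Fan equality, i.e., \((I-P)D^2(I-P)=0\).

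When \(L\) is even, no index satisfies \(n=\mu\), so \(D\) is invertible and \(P=I\). When \(L\) is odd, only the central index is free, so either \(P=I\) or \(P\) is the unique central deletion with \(m=L-1\). Conversely, the central deletion loses nothing: the profiler term vanishes by symmetry and \(\|Pg\|^2=S_L\) at every \(\omega\).

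For odd \(L\ge5\), the central deletion keeps coordinates \(0\) and \(1\), so the argument above shows it is regular-GI. For \(L=3\), the kept coordinates are \(0\) and \(2\), giving response \([1:e^{2i\omega}]\), which is immersive but two-to-one.
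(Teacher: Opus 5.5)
Your proposal is correct and follows the paper's proof essentially step for step: the $\lambda=0$ trace ceiling, Ky Fan selection of $E_{L,m}$ with the cutoff gap $L-m$, the spectral-budget rigidity bound (which also gives uniqueness), and the retained adjacent pair $0,1$ for regular-GI. The one variation is the lossless step, where you obtain $\ker P\subseteq\ker D$ from $\tr((I-P)D^2(I-P))=0$ instead of the paper's pointwise equality $Pg(\omega)=g(\omega)$ plus Fourier uniqueness; these are equivalent since $\operatorname{span}_\omega\{g(\omega)\}=\operatorname{range}D$. Your phrase ``$\Gamma=1$ requires $F(P)=1$'' assumes the supremum is attained, as the paper also does, but your trace route can avoid this by comparing $\Gamma_{L,m}$ directly with the Ky Fan maximum of $\tr(PD^2)/S_L$, which is trivially attained.
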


\begin{proof}
Taking \(\lambda=0\), then averaging and using Fourier orthogonality, gives
\begin{equation}
F(P)\le\frac1{2\pi S_L}\int_0^{2\pi}\|Pg(\omega)\|^2d\omega
=\frac{\tr(PD^2)}{S_L}.
\end{equation}
For \(m=2k\), Ky Fan's theorem \cite{KyFan1949} selects the \(k\) complete symmetric pairs indexed
by \(E_{L,m}\). At \(Q=P_E\),
\begin{equation}
(Qa)^*Qg=i\sum_{n\in E}(n-\mu)=0,
\end{equation}
so profiling removes no energy and the ceiling is attained pointwise. Exact summation gives
\begin{equation}
2\sum_{j=0}^{k-1}(\mu-j)^2
=\frac{k(3L^2-6kL+4k^2-1)}6,
\end{equation}
which, after division by \(S_L\) and \(m=2k\), is the main formula.

If \(m<L\), the smallest selected minus largest unselected eigenvalue is
\begin{equation}
\left(\frac{L-m+1}{2}\right)^2-
\left(\frac{L-m-1}{2}\right)^2=L-m.
\end{equation}
For equal-rank projectors, the spectral budget therefore yields
\begin{equation}
\tr(QD^2)-\tr(PD^2)
\ge(L-m)\tr(P(I-Q)),
\end{equation}
proving rigidity. Equality forces all projector mass into the strictly separated selected
eigenspace, hence \(P=Q\). For \(m\ge4\), \(Q\) retains an adjacent pair; their ratio is
\(z=e^{i\omega}\), so \(Q\) is non-dark, injective, and immersive.

For losslessness, \(\rho_P(\omega)\le\|Pg(\omega)\|^2/S_L\le1\). If \(F(P)=1\), both inequalities
are equalities for every \(\omega\), so \(Pg(\omega)=g(\omega)\). Fourier uniqueness gives
\begin{equation}
\operatorname{span}_{\omega}\{g(\omega)\}=
\begin{cases}
\C^L,&L\ \text{even},\\
\operatorname{span}\{e_n:n\ne(L-1)/2\},&L\ \text{odd}.
\end{cases}
\end{equation}
Compactness and continuity ensure that \(\Gamma=1\) is attained, so a rank-\(m\) row space must
contain this span. This proves the iff classification and uniqueness. At \(L=3,m=2\), central
deletion leaves \([1:z^2]\), which is non-dark and immersive but aliased.
\end{proof}

\begin{proposition}[Even-aperture co-rank-one equality]
For even \(L\ge4\), the co-rank-one value and its two central-deletion regular-GI equality cases
are exactly those in Proposition~5 of the main paper.
\end{proposition}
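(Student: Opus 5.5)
I will follow the outline given after Proposition~\ref{prop:corank} and fill in each step.

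\emph{Step 1: reduction to a scalar sup.} Write \(P=I-vv^*\) with \(\|v\|=1\). If \(v\) makes the design dark at some \(\omega\), then \(F=0\) by Proposition~\ref{prop:zeroset}, so such designs cannot maximize. Assume the design is non-dark. Then \(\|Pa\|^2=L-|A|^2\), \(\|Pg\|^2=S_L-|G|^2\) and \((Pa)^*Pg=a^*g-\bar A G=-\bar A G\), using \(a^*g=0\). The closed formula for \(\rho_P\) gives
\[
S_L\rho_P=S_L-|G|^2-\frac{|A|^2|G|^2}{L-|A|^2}=S_L-\frac{L|G|^2}{L-|A|^2}.
\]
This is the displayed expression for \(F(I-vv^*)\).

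\emph{Step 2: lower bound for the sup.} Bound the sup from below by the ratio of Haar averages:
\[
\sup_\omega\frac{|G|^2}{L-|A|^2}\ge\frac{\int|G|^2}{\int(L-|A|^2)}=\frac{v^*D^2v}{L-1}.
\]
For even \(L\), the eigenvalues \((n-\mu)^2\) of \(D^2\) are at least \(1/4\). Equality holds only on \(\operatorname{span}\{e_{L/2-1},e_{L/2}\}\). This gives \(\Gamma_{L,L-1}\le 1-L/(4S_L(L-1))\), and the right side simplifies to \eqref{eq:corank}.

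\emph{Step 3: the equality cases.} For \(v=e_{L/2-1}\) or \(v=e_{L/2}\), \(|A|=1\) and \(|G|=1/2\) are constant in \(\omega\), so the bound is attained. For a general \(v\), equality in the bound forces two things at once: \(v^*D^2v=1/4\), which puts \(v\) in the central plane, and equality of sup and average. Showing that equality of sup and average is needed requires care. If \(v\) has mass outside the central plane, then \(v^*D^2v>1/4\), so the inequality is strict. That already suffices, since the sup is at least the average ratio.

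For \(v=c_0e_{L/2-1}+c_1e_{L/2}\), set \(z=e^{i\omega}\) and \(\phi=\omega+\arg\) of the relevant cross term. Then
\[
|A|^2=1+a_0\cos\phi,\qquad |G|^2=\tfrac14\bigl|{-\bar c_0}+\bar c_1z\bigr|^2=\tfrac14(1-a_0\cos\phi),
\]
with the same phase \(\phi\) up to sign. So the ratio is
\[
\frac{1-a_0x}{4(L-1-a_0x)},\qquad x\in[-1,1].
\]
Its derivative has the sign of \(-a_0(L-1)+a_0=-a_0(L-2)\le0\). The function is therefore maximized at \(x=-1\), which gives \((1+a_0)/(4(L-1+a_0))\). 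The excess over \(1/(4(L-1))\) is as displayed and vanishes only when \(a_0=0\). Darkness never occurs here, because \(|A|^2\le 1+a_0\le 2<L\).

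\emph{Step 4: regular-GI.} Deleting \(e_{L/2\pm}\) with \(L\ge4\) leaves an adjacent pair of coordinates, for example \(0,1\). Their ratio \(e^{i\omega}\) gives non-darkness, injectivity and immersion.

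The main obstacle is getting the phase bookkeeping in Step 3 right, specifically checking that \(|A|^2\) and \(|G|^2\) depend on the same cosine with opposite signs. I would verify this by direct expansion, using \(n-\mu=\mp1/2\) on the two central indices.
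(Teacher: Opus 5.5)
Your proposal is correct and follows the paper's proof step for step: scalar profiling via \(a^*g=0\), the Haar-average lower bound \(v^*D^2v/(L-1)\ge 1/(4(L-1))\), reduction to the central plane where \(|A|^2=1+u\) and \(|G|^2=\tfrac14(1-u)\) give a ratio decreasing in \(u\), and the adjacent-pair regular-GI check. The one slip is the example in Step~4: for \(L=4\), deleting \(e_{1}\) does not keep coordinates \(0,1\), but it does keep the adjacent pair \(2,3\), as the paper notes, so your conclusion is unaffected.
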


\begin{proof}
Write \(P=I-vv^*\), \(\|v\|=1\), and \(A=v^*a\), \(G=v^*g\). A dark design has \(F=0\) and cannot
maximize. Direct scalar profiling at a non-dark design gives
\begin{equation}
F(I-vv^*)=1-\frac{L}{S_L}
\sup_\omega\frac{|G(\omega)|^2}{L-|A(\omega)|^2}.
\label{eq:supp-corankprofile}
\end{equation}
Since \(L-|A|^2>0\), the supremum is at least the ratio of Haar integrals:
\begin{equation}
\sup_\omega\frac{|G|^2}{L-|A|^2}
\ge\frac{\int|G|^2}{\int(L-|A|^2)}
=\frac{v^*D^2v}{L-1}\ge\frac1{4(L-1)}.
\end{equation}
The last equality uses Fourier orthogonality; the last inequality uses the two central eigenvalues
\(1/4\). Equality forces \(v=c_0e_{L/2-1}+c_1e_{L/2}\). Put \(a_0=2|c_0c_1|\). As \(\omega\)
varies, a real cross term \(u\) spans \([-a_0,a_0]\) and
\begin{equation}
|A|^2=1+u,\qquad |G|^2=\tfrac14(1-u).
\end{equation}
The ratio is decreasing in \(u\), hence
\begin{equation}
\begin{aligned}
\sup_\omega\frac{|G|^2}{L-|A|^2}
&=\frac{1+a_0}{4(L-1+a_0)}\\
&=\frac1{4(L-1)}+
\frac{a_0(L-2)}{4(L-1)(L-1+a_0)}.
\end{aligned}
\end{equation}
Equality holds iff \(a_0=0\), i.e., precisely one central coefficient is nonzero. Substitution into
\eqref{eq:supp-corankprofile} yields
\begin{equation}
\Gamma_{L,L-1}=1-\frac{3}{(L-1)^2(L+1)}.
\end{equation}
Each deletion retains an adjacent coordinate pair (for \(L=4\), deleting coordinate \(1\) leaves
\(2,3\)), proving regular GI.
\end{proof}

\section{Winding, Incidence, and Continuity}

\begin{lemma}[Rank-two winding chamber]
Every rank-two GI projector satisfies
\begin{equation}
\sin^2\theta_{\max}\ge\tau_L:=\frac{4L-1}{4L^2},
\end{equation}
where \(\theta_{\max}\) is its largest principal angle from the endpoint plane. Consequently the
price floor in Theorem~7 holds.
\end{lemma}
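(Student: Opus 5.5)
The plan is to argue by contraposition. I will show that every rank-two projector with \(\sin^2\theta_{\max}<\tau_L\) is not GI, using a winding count. The price floor then follows from the rigidity bound \eqref{eq:rigidity} of Theorem~\ref{thm:even} with \(m=2\) and \(Q=Q_{L,2}=P_{\{0,L-1\}}\).

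\emph{Normal form near the endpoint plane.} Let \(C\in\C^{2\times L}\) be a whitened frame, \(CC^*=I_2\), of the row space of \(P\). Split its columns into the endpoint block \(M\) (columns \(0\) and \(L-1\)) and the middle block \(N\) (columns \(1,\ldots,L-2\)). The singular values of \(M\) are \(\cos\theta_1,\cos\theta_2\), the cosines of the principal angles to the endpoint plane. Hence \(M\) is invertible when \(\theta_{\max}<\pi/2\), and \(\|M^{-1}N\|_{\rm op}=\tan\theta_{\max}\). This holds because \(MM^*+NN^*=I\) gives \((M^{-1}N)(M^{-1}N)^*=(MM^*)^{-1}-I\).

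An invertible output map does not change the projective response \([Pa_L(\omega)]\), nor its injectivity or darkness. I therefore replace \(C\) by \(M^{-1}C\), whose endpoint block is the identity. With \(z=e^{i\omega}\), the two response polynomials are
\[
p_1(z)=1+\varepsilon_1(z),\qquad p_2(z)=z^{L-1}+\varepsilon_2(z),\qquad
\varepsilon_j(z)=\sum_{n=1}^{L-2}(M^{-1}N)_{jn}z^n.
\]
By Cauchy--Schwarz on \(|z|=1\), \(|\varepsilon_j(z)|\le\sqrt{L-2}\,\tan\theta_{\max}\). This is strictly less than \(1\) whenever \(\sin^2\theta_{\max}<1/(L-1)\). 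Since \(\tau_L=(4L-1)/(4L^2)\le1/(L-1)\) (equivalently \((4L-1)(L-1)\le4L^2\)), the chamber \(\sin^2\theta_{\max}<\tau_L\) lies inside this Rouch\'e regime.

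\emph{Winding exclusion.} Rouch\'e's theorem, applied with dominant terms \(1\) and \(z^{L-1}\), gives the following. Neither \(p_1\) nor \(p_2\) vanishes on \(S^1\), so the design is non-dark. Moreover \(p_1\) has no zeros in the disk and \(p_2\) has exactly \(L-1\). Since \(p_1\ne0\) on the circle, the response is represented in the affine chart by \(\phi=p_2/p_1:S^1\to\C^\times\). GI is equivalent to injectivity of \(\phi\), and by the argument principle \(\phi\) has winding number \(L-1\ge2\) about \(0\) when \(L\ge3\). An injective loop is a Jordan curve, and its index about \(0\) is \(0\) or \(\pm1\). Hence \(\phi\) is not injective and \(P\) is aliased. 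Contrapositively, every rank-two GI projector satisfies \(\sin^2\theta_{\max}\ge\tau_L\).

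\emph{Price floor.} For any rank-two GI projector \(P\), we have \(\tr(P(I-Q))=\sin^2\theta_1+\sin^2\theta_2\ge\sin^2\theta_{\max}\ge\tau_L\). By \eqref{eq:rigidity} with \(m=2\),
\[
F(P)\le\Gamma_{L,2}-\frac{12(L-2)}{L(L^2-1)}\cdot\frac{4L-1}{4L^2}
=\Gamma_{L,2}-\frac{3(L-2)(4L-1)}{L^3(L^2-1)}.
\]
Taking the supremum over GI designs gives \eqref{eq:pricefloor}.

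The main obstacle is the normal-form step. It requires checking that the output change \(M^{-1}\) is legitimate, which it is because only the projective response matters. It also requires the exact identity \(\|M^{-1}N\|_{\rm op}=\tan\theta_{\max}\). Everything else is a short Rouch\'e and Jordan-index argument.
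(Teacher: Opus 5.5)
Your proof is correct. It has the same architecture as the paper's. You argue by contraposition, obtain winding \(L-1\) of the response ratio from Rouch\'e, exclude injectivity with the Jordan index fact, and then insert \(\tr(P(I-Q))\ge\sin^2\theta_{\max}\) into the rank-two rigidity bound.

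The genuine difference is the normalization.

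\begin{itemize}
\item The paper keeps orthonormal frames and aligns them by the polar factor, so that \(\|V-U\|_{\rm op}=2\sin(\theta_{\max}/2)\). It then applies Cauchy--Schwarz against the whole vector \(a_L\) of norm \(\sqrt L\). The Rouch\'e condition \(2\sqrt L\sin(\theta_{\max}/2)<1\) is then exactly \(\sin^2\theta_{\max}<\tau_L\), and no output change beyond whitening is needed.
\item You pass to the graph chart \(M^{-1}C=[\,I\ \ T\,]\) with \(T=M^{-1}N\). This costs the extra remark that an invertible output map preserves darkness and projective injectivity, which you correctly supply. In return, the endpoint coefficients are pinned exactly at \(1\) and \(z^{L-1}\), and only the \(L-2\) middle coefficients are perturbed.
\end{itemize}

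Your chart yields the chamber \(\sin^2\theta_{\max}<1/(L-1)\). This strictly contains the paper's chamber, because \((4L-1)(L-1)<4L^2\). So your argument actually proves the sharper floor \(\Pi_{L,2}\ge 12(L-2)/[L(L-1)(L^2-1)]\), although you only claim \(\tau_L\). The constant \(\tau_L\) is an artifact of the polar normalization.

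One small slip: \(M^{-1}(I-MM^*)(M^*)^{-1}=(M^*M)^{-1}-I\), not \((MM^*)^{-1}-I\). Since \(M^*M\) and \(MM^*\) share the eigenvalues \(\cos^2\theta_j\), the conclusion \(\|M^{-1}N\|_{\rm op}=\tan\theta_{\max}\) is unaffected.
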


\begin{proof}
Align endpoint and candidate frames \(U=[e_0,e_{L-1}]\) and \(V=[v_0,v_1]\) by the polar factor so
that \(U^*V\succeq0\). Principal-angle calculus gives
\begin{equation}
\|V-U\|_{\rm op}=2\sin(\theta_{\max}/2).
\end{equation}
Write
\begin{equation}
\begin{aligned}
f_0(z)&=v_0^*a(z)=1+r_0(z),\\
f_1(z)&=v_1^*a(z)=z^{L-1}+r_1(z).
\end{aligned}
\end{equation}
Cauchy--Schwarz on the coefficient vectors gives
\begin{equation}
\|r_j\|_{L^\infty(S^1)}
\le2\sqrt L\sin(\theta_{\max}/2).
\end{equation}
For principal angles in \([0,\pi/2]\), the right side is \(<1\) exactly under the strict condition
\(\sin^2\theta_{\max}<\tau_L\). Rouch\'e's theorem \cite{Conway1978} then gives no disk zeros for
\(f_0\), \(L-1\) disk zeros for \(f_1\), and no zeros on \(S^1\). Thus \(f_1/f_0:S^1\to\C^\times\)
has winding \(L-1\). If the projective response were injective, this ratio would be a Jordan loop.
The bounded component of a Jordan loop contains either the origin, giving index \(\pm1\), or not,
giving index \(0\); hence an injective loop in \(\C^\times\) cannot have index \(L-1\ge2\). The
chamber is non-GI.

If \(\theta_1,\theta_2\) are the two principal angles, then
\begin{equation}
\tr(P(I-Q))=\sin^2\theta_1+\sin^2\theta_2
\ge\sin^2\theta_{\max}.
\end{equation}
The rank-two specialization of rigidity is
\begin{equation}
F(P)\le\Gamma_{L,2}-\frac{L-2}{S_L}\tr(P(I-Q)).
\end{equation}
Every GI plane therefore loses at least \((L-2)\tau_L/S_L\), which simplifies to
\begin{equation}
\Pi_{L,2}\ge\frac{3(L-2)(4L-1)}{L^3(L^2-1)}.
\end{equation}
The chamber inequality is strict; neither its boundary nor the exact constrained optimum is
identified. An adjacent-coordinate plane shows GI feasibility and has \(F=1/(2S_L)\).
\end{proof}

\begin{lemma}[Compactified physical incidence]\label{lem:supp-incidence}
For \(3\le m<L\), regular-GI projectors form an open dense subset of \(\Gr(m,L)\). The full-rank
case is trivially regular-GI.
\end{lemma}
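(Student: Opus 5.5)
We prove openness and density separately; density is the substantive half.

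\emph{Openness.} Let \(P_0\) be regular-GI. Then \(\omega\mapsto[P_0a(\omega)]\) is a non-dark injective immersion of the compact circle. Suppose, for contradiction, that \(P_j\to P_0\) with each \(P_j\) not regular-GI. After passing to subsequences there are three cases.
\begin{itemize}
\item If some \(P_j\) is dark at \(\omega_j\), then \(\omega_j\to\omega\) with \(P_0a(\omega)=0\), which contradicts non-darkness.
\item If some \(P_j\) fails immersion at \(\omega_j\), then \(\rho_{P_j}(\omega_j)=0\). Near \(P_0\) the curves are uniformly non-dark, so \(\rho_P(\omega)\) is jointly continuous there. Hence \(\rho_{P_0}(\omega)=0\), contradicting Proposition~\ref{prop:zeroset}.
\item If some \(P_j\) aliases a pair \(\omega_j\ne\omega_j'\), then \([P_ja(\omega_j)]=[P_ja(\omega_j')]\). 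If the limits are distinct, \(P_0\) is aliased. If \(\omega_j,\omega_j'\to\omega\), use the divided difference \(b\). The vectors \(P_ja\) and \(P_jb\) at \((z_j,w_j)\) are linearly dependent, because the collision means \(a(w)-\lambda a(z)\in\ker P_j\) together with an appropriate combination. In the limit, \(P_0a(\omega)\) and \(P_0a'(\omega)\) are dependent, so \(P_0\) is not immersive at \(\omega\). This is a contradiction.
\end{itemize}
Hence the regular-GI set is open.

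\emph{Density.} We work on the compactified pair space \(\overline{\mathcal X}=\T\times\T\), where the diagonal carries the tangent data \(b(z,z)=a'(z)\). Define the incidence
\[
\mathcal Z=\{(K,z,w):\ K\cap\operatorname{span}\{a(z),b(z,w)\}\ne0\}\subset\Gr(L-m,L)\times\overline{\mathcal X}.
\]
For \(z\ne w\), \(\operatorname{span}\{a(z),b(z,w)\}=\operatorname{span}\{a(z),a(w)\}\) by the divided-difference identity. So a kernel \(K=\ker P\) meets this two-plane exactly when \(Pa(z)\) and \(Pa(w)\) are dependent. That is precisely darkness at \(z\) or \(w\), or the alias \([Pa(z)]=[Pa(w)]\). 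On the diagonal the same dependence is darkness or non-immersion. Therefore the complement of the projection \(\pi(\mathcal Z)\) is exactly the regular-GI set.

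We stratify by \(\dim\operatorname{span}\{a,b\}\in\{1,2\}\) and by \(\dim(K\cap S)\in\{1,2\}\).
\begin{itemize}
\item On the generic stratum \(\dim S=2\), \(\dim(K\cap S)=1\), the Schubert cell has complex codimension \(m-1\) in \(\Gr(L-m,L)\). This gives \(\dim_{\mathbb R}\le 2(L-m)m-2(m-1)+2\) off the diagonal and one less on the diagonal.
\item The deeper stratum \(S\subseteq K\) has codimension \(2m\).
\item A degenerate \(\dim S=1\) can only occur where \(b\) is parallel to \(a\). For distinct unit \(z,w\) and \(L\ge2\), \(a(z)\) and \(a(w)\) are independent. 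On the diagonal, \(a\) and \(a'\) are independent because the first two coordinates are \((1,z)\) and \((0,1)\). So this stratum is empty.
\end{itemize}
Everything is real semialgebraic after writing \(z=e^{i\omega}\) in real coordinates. The image \(\pi(\mathcal Z)\) is then a compact semialgebraic set of dimension at most \(2(L-m)m-(2m-4)\). For \(m\ge3\) this is a strict drop, so \(\pi(\mathcal Z)\) has empty interior and its complement is dense.

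\emph{Main obstacle.} The key technical point is the identification of tangential limits: that the closure of the physical alias locus over the diagonal lies inside the non-immersion-or-dark locus. This is needed both for the compactness of \(\mathcal Z\) and in the openness argument. The divided-difference polynomial \(b\) handles it, since it is polynomial, hence continuous, on all of \(\T^2\). The remaining concern is verifying that the Schubert codimension is uniform over the pair space. This holds because \(S(z,w)\) is always a genuine two-plane, so the fiber dimension is the same at every point.
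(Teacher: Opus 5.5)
Your proposal is correct and follows the same route as the paper. Both arguments use the divided-difference plane $\operatorname{span}\{a(z),b(z,w)\}$ on the compactified physical pair space, the Schubert codimension $m-1$ together with the smaller stratum $S\subseteq K$, the incidence bound $2(L-m)m-2m+4$, and semialgebraic projection to get codimension at least $2m-4>0$ for $m\ge 3$; your sequential openness argument is an explicit version of the paper's remark that the compact projected bad set is closed. You leave the trivial full-rank clause unaddressed: $P=I$ keeps the coordinates $(1,e^{i\omega})$ and is therefore regular-GI.
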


\begin{proof}
For \(z,w\in S^1\), define \(b_0=0\) and
\begin{equation}
\begin{aligned}
b_n(z,w)&=\frac{w^n-z^n}{w-z}
=\sum_{j=0}^{n-1}w^{n-1-j}z^j,\\
b(z,z)&=\partial_za(z).
\end{aligned}
\label{eq:supp-divided}
\end{equation}
Because \(a\) has first coordinate \(1\) and \(b\) has first two coordinates \((0,1)\),
\(
\widetilde S(z,w)=\operatorname{span}_{\C}\{a(z),b(z,w)\}
\)
is always a two-plane. Moreover
\begin{equation}
Pa(z)\wedge Pb(z,w)=0
\quad\Longleftrightarrow\quad
\ker P\cap\widetilde S(z,w)\ne\{0\}.
\label{eq:supp-wedge}
\end{equation}
For \(z\ne w\), this is an off-diagonal projective alias or darkness; for \(z=w\), it is darkness
or projective criticality. Formula \eqref{eq:supp-divided} is the polynomial representative of the
divided wedge on the real-oriented blow-up of the diagonal, so the augmented zero set is closed on
a compact design--pair product. This is a joint design--pair closure, not a fixed-design converse.

Put \(k_{\rm ker}=L-m\) and fix a complex two-plane \(S\). Consider
\begin{equation*}
\mathcal I_S=\{(\ell,K):\ell\in\mathbb P(S),\ K\in\Gr(k_{\rm ker},L),\ \ell\subset K\}.
\end{equation*}
For fixed \(\ell\), the fibre is \(\Gr(k_{\rm ker}-1,L-1)\), of complex dimension \((k_{\rm
ker}-1)m\). Hence
\begin{equation}
\dim_\C\mathcal I_S=1+(k_{\rm ker}-1)m=k_{\rm ker}m-(m-1).
\end{equation}
The map \(\mathcal I_S\to\Sigma_S\) is one-to-one over \(\dim(K\cap S)=1\), giving the top-stratum
dimension. If \(k_{\rm ker}\ge2\), the only exceptional stratum is \(\{K:S\subset
K\}\simeq\Gr(k_{\rm ker}-2,L-2)\), of dimension \((k_{\rm ker}-2)m\), smaller by \(m+1\); it is
empty for \(k_{\rm ker}=1\). No higher intersection is possible. Thus the fixed-pair condition has
real codimension \(2m-2\).

The ordered physical off-diagonal pair contributes two real parameters, so the incidence has real
dimension at most
\begin{equation*}
2+2\{k_{\rm ker}m-(m-1)\}=2k_{\rm ker}m-2m+4,
\end{equation*}
and its design projection has codimension at least \(2m-4\). The one-dimensional boundary gives at
least \(2m-3\). For darkness, a fixed frequency prescribes the line \(\C a(z)\subset K\); the
resulting \(\Gr(k_{\rm ker}-1,L-1)\) has complex codimension \(m\). Allowing frequency leaves
codimension at least \(2m-1\). The case \(k_{\rm ker}=0\) is full rank and was separated in the
lemma statement.

These sets are semialgebraic in the real algebraic model \(x^2+y^2=1\). Tarski--Seidenberg
projection preserves semialgebraicity and cannot increase dimension \cite{Bochnak1998}. Compactness
makes each projected augmented bad set closed. For \(m\ge3\), every bad component has positive
codimension, so the complement is dense. By \eqref{eq:supp-wedge}, the complement is exactly
non-dark, projectively immersive, and globally injective; compact-source embedding stability makes
it open. This is regular GI.
\end{proof}

\begin{lemma}[Zero set and global continuity of \(F\)]
The zero-set characterization in Proposition~3 holds, and \(F\) is continuous on the entire
Grassmannian. Hence regular-GI density implies equality of constrained and unconstrained suprema
for \(m\ge3\).
\end{lemma}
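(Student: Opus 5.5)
The proof has three parts: the zero-set characterization of Proposition~\ref{prop:zeroset}, continuity of \(F\) on \(\Gr(m,L)\), and equality of the suprema. Everything starts from the least-squares form \eqref{eq:rho}, which is defined for every \(P\), dark or not. We write \(\phi(P,\omega,\lambda)=\|P(g(\omega)-\lambda a_L(\omega))\|^2/S_L\), so that \(\rho_P(\omega)=\inf_\lambda\phi\). Since \(\phi\) is jointly continuous in \((P,\omega,\lambda)\), the map \((P,\omega)\mapsto\rho_P(\omega)\) is upper semicontinuous. Hence \(F=\inf_\omega\rho_P\) is upper semicontinuous on the compact Grassmannian, and only lower semicontinuity remains.

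\emph{Zero set.} Suppose first that \(Pa_L(\omega_0)=0\). Use the germ \(Pa_L(\omega_0+t)=t^rq(t)\) and the identity \(Pg=h'-i\mu h\), where \(h=Pa_L\) (equivalently \(Ca_L\)). Then \(h'=rt^{r-1}q+t^rq'\). Choosing \(\lambda_t=r/t-i\mu\) gives a residual \(P(g-\lambda_th\text{-coefficient})\) equal to \(t^rq'(t)\), which tends to \(0\); hence \(F(P)=0\). If \(P\) is non-dark but the projectivized curve is critical at some \(\omega\), then the non-dark closed formula shows \(\rho_P(\omega)=0\) directly. Conversely, suppose \(P\) is non-dark and immersive. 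On the compact circle, \(\rho_P\) is given by the continuous closed formula, and it is pointwise positive because positivity is exactly projective regularity. Its minimum is therefore positive. The trichotomy then follows from the definitions of GI and regular-GI.

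\emph{Lower semicontinuity.} Take \(P_j\to P\) with \(F(P_j)\to\liminf F\). We pick \(\omega_j\) and \(\lambda_j\) with \(\phi(P_j,\omega_j,\lambda_j)\le F(P_j)+1/j\) and pass to a subsequence \(\omega_j\to\omega\). If \(F(P)=0\) there is nothing to prove, so assume \(F(P)>0\); by the first part, \(P\) is non-dark. There are two cases.
\begin{itemize}
\item If \(\lambda_j\) stays bounded, extract \(\lambda_j\to\lambda\). Joint continuity gives \(\phi(P,\omega,\lambda)\le\liminf F(P_j)\), so \(F(P)\le\liminf F(P_j)\).
\item If \(|\lambda_j|\to\infty\), the residuals are bounded by \(\sup\|g\|^2/S_L+1\). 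Dividing by \(|\lambda_j|^2\) gives \(\|P_ja_L(\omega_j)\|\to0\), hence \(Pa_L(\omega)=0\). This contradicts non-darkness.
\end{itemize}
So \(F\) is continuous. The delicate point is exactly this divergent-profiler case. It must be routed through the zero-set statement rather than through pointwise continuity of \(\rho_P\), which genuinely fails at dark designs.

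\emph{Suprema.} The Grassmannian is compact, so continuity gives a maximizer \(P^\star\) with \(F(P^\star)=\Gamma_{L,m}\). For \(3\le m<L\), Lemma~\ref{lem:supp-incidence} supplies regular-GI \(P_j\to P^\star\); continuity gives \(F(P_j)\to\Gamma_{L,m}\), so \(\Gamma^{\GI}_{L,m}\ge\Gamma_{L,m}\). The reverse inequality is trivial, which proves \eqref{eq:suprema}. The case \(m=L\) is immediate, since \(P=I\) is regular-GI.
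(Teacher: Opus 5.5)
Your proposal is correct and follows essentially the same route as the paper: a dark-germ profiler to get \(F=0\) at darkness, the closed non-dark formula plus compactness for the converse, a bounded/divergent-profiler split for lower semicontinuity in which divergence forces the limit design to be dark, and density plus continuity for the suprema. The differences are cosmetic. You obtain upper semicontinuity as an infimum of continuous functions rather than by \(\varepsilon\)-approximate minimizers, you use the profiler \(\lambda_t=r/t-i\mu\) (residual \(t^rq'\)) instead of \(r/t\) (residual \(t^r(q'-i\mu q)\)), and you rule out darkness by contradiction after assuming \(F(P)>0\) instead of concluding \(F(P)=0\le\ell\) directly.
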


\begin{proof}
At a non-dark frequency, finite least squares gives
\begin{equation}
\rho_P(\omega)=\frac1{S_L}\left(\|Pg\|^2-
\frac{|(Pa)^*Pg|^2}{\|Pa\|^2}\right).
\label{eq:supp-pointwise}
\end{equation}
It is continuous there and vanishes exactly when \(Pg\) is collinear with \(Pa\), i.e., at a
projective critical point. If \(Pa(\omega)\ne0\) everywhere, compactness gives
\(\inf_\omega\|Pa(\omega)\|>0\), so \eqref{eq:supp-pointwise} is continuous on the entire circle.
It is pointwise positive exactly for a projective immersion, and compactness then makes its infimum
positive. This argument is not used at darkness.

At a dark point, write \(h(t)=Pa(\omega_0+t)=t^rq(t)\). Since \(Pg=h'-i\mu h\), choosing
\(\lambda_t=r/t\) gives
\begin{equation}
P(g-\lambda_ta)=t^r(q'-i\mu q)\longrightarrow0,
\label{eq:supp-darklimit}
\end{equation}
so \(F(P)=0\). This proves the iff zero set. It also shows why the pointwise algebraic profile may
jump and why the frequency infimum need not be attained.

For continuity, define
\begin{equation}
J(P,\omega,\lambda)=S_L^{-1}
\|P(g(\omega)-\lambda a(\omega))\|^2.
\end{equation}
Let \(P_j\to P\) be arbitrary. For every \(\varepsilon>0\), the infimum supplies finite
\((\omega_\varepsilon,\lambda_\varepsilon)\) with \(J(P,\omega_\varepsilon,\lambda_\varepsilon)\le
F(P)+\varepsilon\). Hence
\begin{equation*}
\limsup_jF(P_j)\le\lim_jJ(P_j,\omega_\varepsilon,\lambda_\varepsilon)
\le F(P)+\varepsilon.
\end{equation*}
Letting \(\varepsilon\downarrow0\) proves upper semicontinuity without assuming attainment. For
lower semicontinuity, pass to a subsequence for which \(F(P_j)\to\ell=\liminf_iF(P_i)\), and choose
\begin{equation}
J(P_j,\omega_j,\lambda_j)\le F(P_j)+j^{-1}.
\end{equation}
Since \(0\le F(P_j)\le1\), the residuals are bounded. Compactness gives \(\omega_j\to\omega\) along
a subsequence. If \((\lambda_j)\) is bounded, pass to \(\lambda_j\to\lambda\); joint continuity
gives \(F(P)\le J(P,\omega,\lambda)=\ell\). Otherwise pass to a subsequence with
\(|\lambda_j|\to\infty\). Bounded residuals and \(\|P_jg(\omega_j)\|\le\sqrt{S_L}\) imply
\begin{equation}
P_ja(\omega_j)=
\frac{P_jg(\omega_j)-P_j(g(\omega_j)-\lambda_ja(\omega_j))}{\lambda_j}
\longrightarrow0.
\end{equation}
Hence \(Pa(\omega)=0\), and \eqref{eq:supp-darklimit} gives \(F(P)=0\le\ell\). Thus
\(F(P)\le\liminf_jF(P_j)\) for every sequence, proving continuity on the metric Grassmannian.

The compact Grassmannian has a maximizer \(P_\star\). Lemma~\ref{lem:supp-incidence} supplies
regular-GI \(P_j\to P_\star\) for \(m\ge3\); continuity gives \(F(P_j)\to\Gamma_{L,m}\). Since
every regular-GI projector is GI,
\begin{equation}
\Gamma^{\GI}_{L,m}=\Gamma_{L,m},\qquad 3\le m\le L,
\end{equation}
as an equality of suprema. No general odd-rank attaining projector is inferred.
\end{proof}

\bibliographystyle{IEEEtran}
\bibliography{references}

\end{document}